\title{\huge \bf{Limits on Relativistic Quantum Measurement} \\ \vspace{10mm} \Large Part III Mathematics Essay}
\author{\Large J. Fuksa}
\date{\normalsize\today}
\DeclareMathAlphabet{\mathpzc}{OT1}{pzc}{m}{it}
\DeclareMathOperator{\Tr}{Tr}
\DeclareMathOperator\supp{supp}
\newtheorem{theorem}{Theorem}[section]
\newtheorem{lemma}[theorem]{Lemma}
\theoremstyle{definition}
\newtheorem{definition}{Definition}[section]
\theoremstyle{remark}
\newtheorem*{remark}{Remark}
\theoremstyle{claim}
\newtheorem{claim}{Claim}[section]
\begin{document}
\maketitle
\pagenumbering{arabic}

\begin{abstract}
Requiring causality on measurements in quantum field theory seems to impose strong conditions on a self-adjoint operator to be really measurable. This may seem limiting and artificial in the operator language of algebraic quantum field theory (AQFT), but is essential for a truly relativistic theory. Recent publications attempt to deal with this issue by including the apparatus into the formalism, connecting AQFT with measurement theory, but other options have been suggested. In this essay, I discuss the causality conditions on self-adjoint operators both in the language of AQFT and in the language of quantum information theory. I then present measurement theory in AQFT, modelling the apparatus as a quantum field with coupling to the measured system restricted to a region of spacetime. I highlight how this approach leads to a causally well behaved theory. Finally, I attempt to formulate the causality conditions on measurements in the Feynman path integral approach, using the concept of decoherent histories. I claim that the path integral approach has problems with causality similar to the operator based approaches and that even here causality is an a posteriori condition. 
\end{abstract}

\tableofcontents

\pagebreak
\begin{multicols}{2}

\section{Introduction}
The fundamental differences between the language of quantum theories and general relativity have caused some serious issues when trying to come up with a fully relativistic quantum field theory, not to mention quantum gravity. The particular issue that I will be dealing with in this essay is that causality is in no way present in the standard quantum formalism, even in relativistic quantum field theories. This has been pointed out by Sorkin \cite{sorkin}. The issue lies in the way measurement is implemented using the projection postulate: it seemingly allows for superluminal signalling, even when non-selective measurements are considered. 

Recently, a measurement theory has been proposed by Fewster and Verch (the FV formalism) in the algebraic quantum field theory (AQFT) language, modelling the interaction between the measured system and the probe \cite{fewster}. In \cite{bostelmann} it has been shown that this in fact leads to a causally well behaved theory. Furthermore, it allows us to deal with the non-local and relativistically ambiguous character of the projection postulate. However, the FV formalism doesn't give us conditions on observables in QFT that would ensure causality.

After introducing some notation and terminology in section \ref{notation_term}, and then describing the Sorkin scenario, which is a particular setup of observers in spacetime which highlights the issues with causality (in section \ref{sorkin_scn}), I discuss the relationship between causality and Hilbert space quantum mechanics in section \ref{QM}. Motivated by the results of this section, I move on to quantum field theory in section \ref{causality_qft}. In section \ref{aqft} I introduce AQFT and relate its rather abstract language to the more familiar Hilbert space picture, using *-algebra representation theory. I then move on to discuss the FV framework in section \ref{FV-framework} and I show how it deals with causality. In section \ref{sum_over_hist}, I then discuss causality in the path integral formalism, using the concept of decoherent histories. This approach has been suggested in \cite{sorkin} and \cite{borsten}. I briefly summarise the conclusions in section \ref{conclusions}.

\section{Notation and terminology} \label{notation_term}
\subsection{Spacetime}
In this text, I will be studying causality in quantum systems in a globally hyperbolic Lorentzian manifold $M$ with time orientation and a metric $g$ with signature $+-...-$. Denote the collection of the manifold, metric and time orientation $\bm{M}$. The spacetime may be curved, but has to be a static background. I will now establish some notation and terminology I will be using throughout. 

The \textit{causal future/past} of $x \in M$ is the set of all points that can be reached from $x$ by causal future/past oriented curves from $x$ and is denoted $J^\pm (x)$. Note that $x \in J^\pm(x)$. For a region $S \subset M$ we write 
\begin{equation} J^\pm (S) \coloneqq \bigcup\limits_{x \in S} J^\pm(x) \: \text{ and } \: J(S) \coloneqq \J^+(S) \cup J^-(S). \end{equation}
The \textit{causal hull} of a region is $J^+(S) \cap J^-(S)$. If $S = J^+(S) \cap J^-(S)$, $S$ is called \textit{causally convex}. 

The \textit{causal complement} of a region $S$ is 
\begin{equation} S^\perp \coloneqq M \backslash J(S). \end{equation}
Regions $T$ and $S$ are \textit{causally disjoint} if $T \subset S^\perp$ (or equivalently $S \subset T^\perp$). If $J^+(S) \cap J^-(T)$ is empty, then there is a Cauchy surface to the future of $T$ and to the past of $S$. 

The \textit{future/past Cauchy development} of a subset $S \subset M$ is the set of points $p$, such that all past/future-inextendible causal curves through $p$ meet $S$. It is denoted $D^\pm(S)$. Define $D(S) \coloneqq D^+(S) \cup D^-(S)$. 

Define the ``in" region $M^-$ and the ``out" region $M^+$ of a subset $S \subset M$ by $M^\pm(S) \coloneqq M \backslash J^\mp(S)$.

\subsection{*-algebras and Hilbert spaces}
Given a Hilbert space $\mathcal{H}$, I will denote by $\mathcal{B}(\mathcal{H})$ the space of bounded linear operators on $\mathcal{H}$, which can be made a Hilbert space under the inner product
\begin{equation} \langle A, B \rangle = \Tr{(A^*B)}, \end{equation}
where $A^*$ denotes the adjoint. I will denote by $\mathcal{D}(\mathcal{H})$ the space of all density matrices, i.e. positive Hermitian operators with unit trace. 

One of the main mathematical objects in this text are *-algebras. A *-algebra $\mathpzc{A}$ is a complete normed algebraic vector space with an involution $^*: \mathpzc{A} \rightarrow \mathpzc{A}$, which preserves the algebraic operations. A special class are the C*-algebras, which satisfy the C* identity 
\begin{equation} \| A^*A \| = \| A \|^2. \end{equation}
For a detailed exposition, see e.g. \cite{algebras}.

\section{The Sorkin scenario} \label{sorkin_scn}
Traditionally, measurement is the centre point of quantum mechanics. Not only does it give us access to information about the abstract quantum state and hence allows us to interact with nature, it also itself changes the state of the system according to the projection postulate, so that sufficiently fast subsequent measurements of the same quantity give the same answers. Therefore the order of measurements is of fundamental importance. When trying to formulate quantum mechanics relativistically, this becomes an essential issue, since temporal order becomes observer dependent. As noticed by Sorkin in \cite{sorkin}, this problem manifests itself through worrying violations of causality. To analyse this, Sorkin proposed a setting, which I will refer to as the \textit{Sorkin scenario}.

Suppose three observers, Alice $\mathcal{O}_A$, Bob $\mathcal{O}_B$ and Charlie $\mathcal{O}_C$, located in regions $O_A$, $O_B$ and $O_C$ respectively, such that $O_A$ and $O_C$ are spacelike separated, but 
\begin{equation}
O_B \cap J^+(O_A) \neq \O \text{ and } O_B \cap J^-(O_C) \neq \O.
\end{equation}
This setup is illustrated in figure \ref{fig_sorkin}. We can now let Alice and Bob perform quantum operations $\mathcal{E}_A$ and $\mathcal{E}_B$ respectively and let Charlie measure an observable $A$. We can now study the conditions imposed on $\mathcal{E}_A$, $\mathcal{E}_B$ and $A$ by requiring that Charlie's measurement cannot depend on Alice's operation, i.e. that Alice cannot signal to Charlie. 

To see the main problem Sorkin scenario highlights, note that Bob's measurement fixes the causal ordering of the observers, even though Alice and Charlie are causally disjoint. That means that even if we demand that projectors associated with Alice's and Charlie's measurement commute, after inserting Bob's measurement in between, this will not be helpful. It is therefore quite easy to suggest a scheme in which Alice will be able to signal to Charlie, e.g., as noted by Sorkin in \cite{sorkin}, an incomplete Bell measurement.

\end{multicols}

\begin{figure}
\centering
\includegraphics{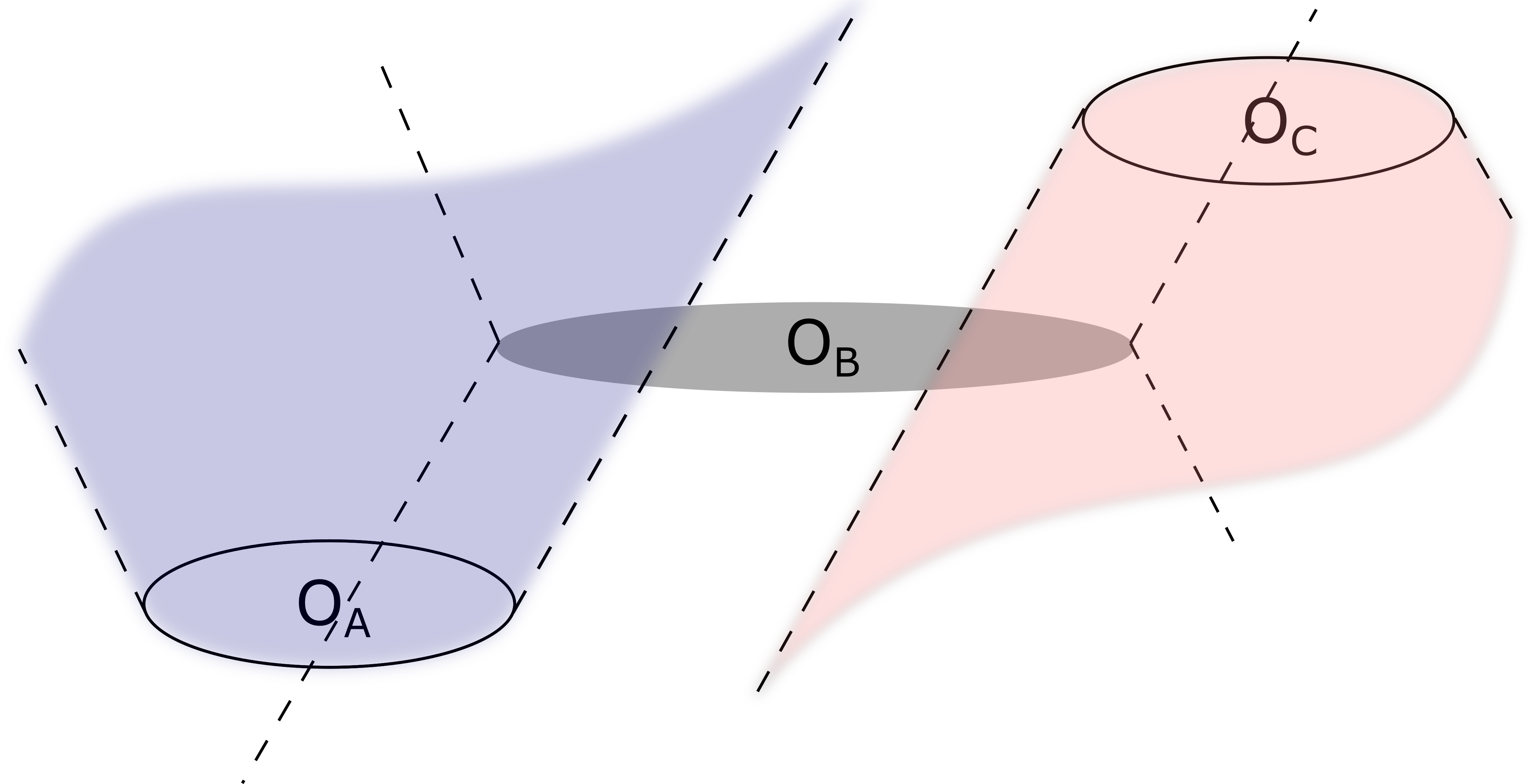}
\caption{\textbf{The Sorkin scenario.} $O_A, O_B$ and $O_C$ are regions controlled by Alice, Bob and Charlie respectively. The dashed lines are null curves. The blue region is $J^+(O_A)$, the red region is $J^-(O_C)$. Time is vertically upwards, one spatial dimension is on the horizontal axis.}
\label{fig_sorkin}
\end{figure}
\hrulefill
\begin{multicols}{2}

\section{Causality in quantum mechanics} \label{QM}
In this section I will discuss the Sorkin scenario in the context of Hilbert spaces and ordinary quantum mechanics.

\subsection{Dynamical variables vs. coordinates} \label{dyn_vars_vs_coords}
As pointed out by Hilgevoord in \cite{time}, accounts of quantum mechanics are notoriously sloppy in the distinction between position and time as dynamical variables $\bm{q}$, $\eta$ and as spacetime coordinates $\bm{x}$, $t$. Dynamical variables are coordinates in phase-space and are system-specific. For example a point-like particle will have a phase-space spanned by position $\bm{q}$ and momentum $\bm{p}$, whereas an ideal clock may be described by the dynamical time variable $\eta$ and its conjugate $\theta$. In quantum mechanics, each dynamical variable of a system is assigned a self-adjoint operator. There is no causal structure in phase-space, only trajectories parametrized by spacetime coordinates. On the other hand spacetime coordinates are system independent and don't have an associated operator in quantum mechanics. Crucially for our discussion, these are coordinates of the manifold $M$, which has a time orientation and a causal structure. 

Consider a particle in ordinary quantum mechanics. It is completely described by a wavefunction $\psi(\bm{q}, t).$\footnote{Note that here I use the notation $\bm{q}$ instead of the more usual $\bm{x}$ for the dynamical variable to avoid confusion with the coordinate, following Hilgevoord.} Here $\bm{q}$ is the dynamical variable, whereas $t$ is the spacetime coordinate. This is because the particle's state-space is the usual Hilbert space with a position and momentum operators; dynamics in time $t$ (which is a parameter and doesn't have an operator associated with it) is given by a Hamiltonian operator, according to the Schr\"{o}dinger equation. In this sense, quantum particle dynamics is a quantum field theory (QFT) in $0+1$ dimensions. Since $0+1$ dimensional Minkowski spacetime has trivial causal structure, ordering measurements poses no additional complication and we can unambiguously impose the projection postulate. This suggests that trying to make ordinary quantum mechanics relativistic in the Hilbert space picture is a lost cause.

\subsection{Causality from classical localization} \label{causality_from_classical_loc}
Since spacetime structure doesn't appear at all in Hilbert space quantum mechanics, we could resort to treating particle positions completely classically and quantize only their internal degrees of freedom. For example, we can consider qubits at spacetime locations $q_i = \mathrm{x}_i = (\bm{x}_i, t_i) \in M$ with quantum states $\ket{\psi_i} \in H_i$, where $H_i$ are the (for qubits 2-dimensional) Hilbert spaces of the qubit's internal state. We would now like to study the operations we can perform on the total Hilbert space of all qubits $\mathcal{H} = \bigotimes_i{H_i}$. 

In this framework however, since the positions of the qubits are treated classically, I have no other choice than to use the classical notion of locality. Therefore all operations on a system have to happen \textit{at a spacetime point}, in this approximation. Trying to study operations e.g. on Cauchy surfaces is going to lead to violations of causality almost by assumption, just as it does in the classical case: Consider a non-rotating light long \textit{rigid} rod in Minkowski spacetime with Alice and Bob located at each of its ends. Alice and Bob each have a large mass they can attach to the rod's end. If Bob pulls the rigid rod, by Newton's laws he can immediately find out, whether Alice attached her mass or not, so we get super-luminal signalling. The non-local rigidness of the rod, which here breaks causality, is the classical version of an operation on a Cauchy surface. When we treat position classically, it is \textit{locality} that secures causality, no matter whether we are performing quantum operations on the internal degrees of freedom of qubits or completely classical operations. \textit{Rigidity} is a well defined concept in relativistic physics, as pointed out by Max Born in \cite{rigidity}. The idea is that given a congruence $\mathcal{C}$ of timelike curves, which represents motion of the body, we get hypersurfaces orthogonal to the 4-velocities $u^a(x)$ of the worldlines in $\mathcal{C}$. In these hypersurfaces we can define a local space 3-metric $h_{ab}$. Rigidity is then defined by $(\mathcal{L}_u h)_{ab} = 0$ at each point of the body. This however requires a coordinated acceleration of each of the points and cannot be used to transmit information.

The question of which quantum operations are allowed on $\mathcal{H}$ when the qubits are separated in spacetime is therefore of practical importance: What spacetime setups of observers can I choose, so as to perform an operation $A$ on $\mathcal{H}$? What resources (e.g. shared entanglement) do they need? In no way does this question address the projection postulate or the notion of measurement in quantum mechanics. Locality and causality are completely classical in this approximation. After all, we could always imagine bringing the qubits together and putting all of them at the same time into a device performing the operation $A$. Since the worldlines of the qubits intersect in the device, using $A$ as the action of Bob in the Sorkin scenario ensures that Alice and Charlie will be timelike separated. Hence, $A$ is allowed to permit Alice to signal Charlie and the projection postulate can be applied unambiguously.

Suppose now that $n$ observers $\mathcal{O}_i$ at locations $\mathrm{y}_i = (\bm{y}_i, t')$,\footnote{By spacetime locations of an observer I mean the spacetime coordinates of the operation that they perform. I will use this slight abuse of terminology throughout this section.} each have access to a system with density matrix $\rho_i \in \mathcal{D}(H_{A_i})$. The system density matrices can be entangled, such that the total state in the Hilbert space $\mathcal{H}_A = \bigotimes_i{H_{A_i}}$ is $\rho$. Demand
\begin{equation}
\rho_i = \Tr_{A_{j\neq i}}{(\rho)}.
\end{equation}
Here the subscript $A_{j \neq i}$ on the trace denotes partial trace over all the systems, except for $A_i$. We wish to perform an operation $\mathcal{E}$ on $\mathcal{H}_A$. Following \cite{preskill} and generalizing slightly, I will now give conditions on the operation to be performable by a set of localized observers. My notion of performability connects the concepts of localizability and semilocalizability in \cite{preskill}.

First of all, define subsets of observers called \textit{causal chains}.

\begin{definition}[Causal chain]
Suppose a collection of observers $O = \{\mathcal{O}_i\}$ localized in a globally hyperbolic time oriented spacetime. A subset $C \subseteq O$ is called a \textit{causal chain}, if there exists a smooth timelike curve $\Gamma$ joining all observers in $C$.
\end{definition}

\begin{remark}
Each causal chain admits a unique \textit{causal order}, following the curve $\Gamma$ forwards in time.
\end{remark}

Let us cover the set of all observers $O = \{\mathcal{O}_i\}$ by causal chains $C_j \subseteq O$, such that every observer is in at least one chain, but one observer can be in multiple chains at once. This covering is not unique. I will provide observers in each causal chain with a joint ancilla state $\sigma_j \in \mathcal{D}(H_{B_j})$, which can be entangled into a state $\sigma \in \mathcal{D}(\mathcal{H}_B)$, where $\mathcal{H}_B = \bigotimes_j{H_{B_j}}$. Like for the system states, we demand
\begin{equation}
\sigma_i = \Tr_{B_{j \neq i}}{(\sigma)}.
\end{equation}
The observers in $C_k$ can send quantum information to each other in the causal order of $C_k$, which is why they have a common ancilla. Now each observer has access to their part of the system and to the ancillas of all causal chains they belong to. An observer $\mathcal{O}_k \in C_k$, however, can perform an operation on the ancilla $\sigma_k$ \textit{after} it has been acted on by all the observers in $C_k$ preceding $\mathcal{O}_k$ in the causal order on $C_k$.

Now let us causally order observers in the causal chains. Since if $\mathcal{O}_m$ precedes $\mathcal{O}_n$ in some causal chain, it will precede it in any other causal chain where they both happen to be,\footnote{This is a consequence of the assumption that the spacetime is globally hyperbolic.} we can use the causal orderings of $C_j$ to get a causal ordering on $O$ by demanding that if $\mathcal{O}_r$ precedes $\mathcal{O}_s$ in some causal chain, it also precedes it in the ordering on $O$. The ordering in $O$ is unique up to exchanges of observers who don't have access to any common ancillas.

This setup allows us to define a \textit{performable operation}.
\begin{definition}[Performability]
In the notation and setup from above, a quantum operation $\mathcal{E}$ on $\mathcal{H}_A$ is \textit{performable} by a set $O$ of $n$ observers, who are located in spacetime $\bm{M}$, if and only if there exists a covering of $O$ by causal chains $C_j$, such that in the causal ordering $O$ inherits from the causal orderings of $C_j$'s as above, $\mathcal{E}$ can be written as
\begin{equation}
\mathcal{E}(\rho) = \Tr_{B}{(\mathcal{E}_n \circ ... \circ \mathcal{E}_1)(\rho \otimes \sigma)},
\end{equation}
where each $\mathcal{E}_i$ is an operation that acts trivially on all the Hilbert spaces that cannot be accessed by $\mathcal{O}_i$.
\end{definition}

\begin{remark}
The non-uniqueness of ordering on $O$, given a covering $C_j$, does not pose any problems. This is because if the ordering of observers $\mathcal{O}_m$ and $\mathcal{O}_n$ is not determined by their order in some of the causal chains, they have access to different resources and hence the operations $\mathcal{E}_m$ and $\mathcal{E}_n$ will commute.
\end{remark}

Notice that this definition is highly dependent on the set $O$ and the positions of the observers. If for example we have just a single observer with access to all the resources, this definition does not restrict their actions at all.

\section{Causality in field theory} \label{causality_qft}

In order to get a fully quantum concept of causality, we have to somehow introduce the spacetime coordinates into the formalism. We need a QFT in more than $0+1$ dimensions. 

One of the ways forward is to get some motivation from quantum mechanics in the Heisenberg picture. We have a time independent state $\ket{\psi}$ in a Hilbert space $\mathcal{H}$ and a continuous homomorphism on the linear operators on $\mathcal{H}$ parametrized by time, the only spacetime coordinate we have in this theory. The homomorphism is defined by 
\begin{equation} \label{heisenberg}
\begin{split}
\frac{d}{dt} A(t) &= i [H, A(t)], \\
A(0) &= A,
\end{split}
\end{equation}
for all linear operators $A$, where $H$ is the Hamiltonian. If $A$ is some physical observable at $t = 0$, eq. \ref{heisenberg} gives us the corresponding operator at different times, i.e. in different regions of our $0+1$ dimensional spacetime. This can be seen as a motivation behind algebraic quantum field theory (AQFT).

The discussion from the previous section shows that causality doesn't come automatically in quantum theories, see e.g. \cite{sorkin} or \cite{borsten}. In particular, the Sorkin scenario highlights the central issue. We have to introduce a proper notion of locality first. It is not at all obvious how to do that in QFT. In the following, I will first define AQFT and then introduce the Fewster-Verch framework, which provides such concept of locality in AQFT, and show how this indeed yields a fully causal theory.

\subsection{AQFT} \label{aqft}
Since we would like to be able to add and multiply observables, they should form an algebraic structure. An AQFT associates a *-algebra $\mathpzc{A}(\bm{M})$ with a unit $\mathds{1}$, with the globally hyperbolic Lorentzian spacetime $\bm{M}$ with time orientation. $\mathpzc{A}$ inherits the topology of $\bm{M}$ in the following sense. Each causally convex open subset $N \subset M$ has an associated unital sub-*-algebra $\mathpzc{A} (\bm{M}; N) \subseteq \mathpzc{A}(\bm{M})$. The following conditions are imposed on the sub-*-algebras, in an attempt to ensure the correct causal and local behaviour of the theory.
\begin{enumerate}
	\item \textit{(Isotony)} If $N_1 \subset N_2$, then $\mathpzc{A}(\bm{M}; N_1) \subseteq \mathpzc{A}(\bm{M}; N_2)$. \par
	\item \textit{(Compatibility)} \label{compatibility} Providing $N \subset M$ with the metric and time-orientation inherited from $\bm{M}$, we get a spacetime $\bm{N}$. The AQFT associates with $\bm{N}$ a *-algebra $\mathpzc{A}(\bm{N})$. Compatibility demands that there is an injective unit-preserving algebraic *-homomorphism $\alpha_{\bm{M}; \bm{N}}: \mathpzc{A}(\bm{N}) \rightarrow \mathpzc{A}(\bm{M})$, with the subalgebra $\mathpzc{A}(\bm{M}; N)$ as its image. Furthermore, these maps have to obey $\alpha_{\bm{M_1}; \bm{M_2}} \circ \alpha_{\bm{M_2}; \bm{M_3}} = \alpha_{\bm{M_1}; \bm{M_3}}$, for all $M_3 \subset M_2 \subset M_1$. \par
	\item \textit{(Time-slice property)} Whenever $N$ contains a Cauchy surface, $\mathpzc{A}(\bm{M}; N) = \mathpzc{A}(\bm{M})$. \par
	\item \textit{(Einstein causality)} \label{einstein} If $N_1$ and $N_2$ are causally disjoint, then elements of $\mathpzc{A}(\bm{M}; N_1)$ commute with elements of $\mathpzc{A}(\bm{M}; N_2)$. \par
	\item \textit{(Haag property)} \label{haag} Let $K$ be a compact subset of $M$. If an element $A \in \mathpzc{A}(\bm{M})$ commutes with all elements of $\mathpzc{A}(\bm{M}; N)$ for all $N \subseteq K^\perp$, then $A \in \mathpzc{A}(\bm{M}; L)$, where $L$ is any connected open causally convex subset containing $K$.
\end{enumerate}

I will now give a brief motivation behind these conditions; for a thorough discussion see \cite{haag} or \cite{aqft}. The basic idea is that self-adjoint elements of $\mathpzc{A}(\bm{M}; N)$ represent observables that are localized in $N$ in the Heisenberg picture. With this in mind, condition 1) is natural. Condition 2) demands that submanifolds inherit their algebra from the parent algebra in a natural way. Condition 3) is imposed so that we recover the quantum mechanical picture of ``time" evolution between Cauchy surfaces, i.e. we have to be able to measure \textit{all} observables in a thin slice around each Cauchy surface. Finally the last two properties 4) and 5) are the attempt to ensure causality and locality in the theory. This will be discussed further in the following.

\subsubsection{States} \label{states}
We can now define states in AQFT.
\begin{definition}[States in AQFT]
Given a *-algebra $\mathpzc{A}(\bm{M})$, \textit{states} are linear functionals from $\mathpzc{A}(\bm{M})$ to $\mathbb{C}$, such that $\omega(A^*A) \geq 0 \: \forall A \in \mathpzc{A}(\bm{M})$. They are also normalized with respect to the unit, such that $\omega(\mathds{1}) = 1$. A state $\omega$ is \textit{pure} if and only if it cannot be written as $\omega = \lambda \sigma + (1-\lambda) \mu$ for some states $\sigma \neq \mu$ and $0 < \lambda < 1$.
\end{definition}

The outcome of the action of a state $\omega$ on a observable $A \in \mathpzc{A}$ is the expectation value of the measurement of $B$ on a quantum field in a state $\sigma$. All this is rather abstract, so I will now give a brief discussion of ways to connect it with more familiar QFT formalism in terms of \textit{representations}. A more thorough summary can be found in \cite{aqft}.

\begin{definition}[*-algebra representations] 
A \textit{representation} of a unital *-algebra $\mathpzc{A}$ is a triple $(\mathcal{H}, \mathcal{D}, \pi)$, where $\mathcal{H}$ is a Hilbert space, $\mathcal{D}$ a dense subspace of $\mathcal{H}$ and $\pi$ a map from $\mathpzc{A}$ to linear operators on $\mathcal{H}$, such that
\begin{enumerate}
	\item $\mathcal{D}$ is the domain of $\pi(A) \: \forall A \in \mathpzc{A}$ and the range of each $\pi(A)$ is contained in $\mathcal{D}$,
	\item $\pi(\mathds{1}) = \mathds{1}$, the identity on $\mathcal{H}$,
	\item $\pi(A + \lambda B + CD) = \pi(A) + \lambda \pi(B) + \pi(C) \pi(D)$, so that the representation respects the algebra operations,
	\item each $\pi(A)$ has an adjoint $\pi(A)^*$, which has a domain containing $\mathcal{D}$, and
	\item $\pi(A)^*$ restricted to $\mathcal{D}$ is equal to $\pi(A^*)$.
\end{enumerate}
\end{definition}
A representation is \textit{faithful} if $\ker{\pi} = \O$; and is \textit{irreducible} if there are no subspaces of $\mathcal{H}$ invariant under $\pi(A)$, which are neither trivial, nor dense. 

Suppose $\pi$ has image that is dense in the space of linear operators on $\mathcal{H}$. Now, given a representation $(\mathcal{H}, \mathcal{D}, \pi)$ of $\mathpzc{A}$, the closure of
\begin{equation} \tilde{\rho}(\omega)(\pi(A)) \coloneqq \omega(A) \end{equation}
defines for each state $\omega$ on $\mathpzc{A}$ a linear functional $\tilde{\rho}(\omega)$ on the space of linear operators on $\mathcal{H}$. To see that $\tilde{\rho}(\omega)$ is indeed a linear functional, consider
\begin{equation}
\begin{split}
\tilde{\rho}(\omega)\big(\pi(A) + \lambda \pi(B)\big) &= \tilde{\rho}(\omega)\big(\pi(A + \lambda B)\big) \\
&= \omega(A + \lambda B) \\
&= \omega(A) + \lambda \omega(B) \\
&= \tilde{\rho}(\omega)(\pi(A)) + \lambda \tilde{\rho}(\omega)(\pi(B)) 
\end{split}
\end{equation}
using linearity of the states and the *-algebra representation properties. Since the image of $\pi$ is assumed dense, this proves the linearity of $\tilde{\rho}(\omega)$. This now allows us to use the Riesz representation theorem on the Hilbert space of bounded linear operators $\mathcal{B}(\mathcal{H})$ with inner product
\begin{equation} \langle A, B \rangle \coloneqq \Tr{(A^*B)}, \end{equation}
to get an operator $\rho$ corresponding to each $\tilde{\rho}$. This operator is positive by the positivity of $\omega$ and the properties of *-algebra representation. It is also Hermitian, since $\forall A \in \mathpzc{A}$ with $\pi(A) \in \mathcal{B}(\mathcal{H})$
\begin{equation} \begin{split}
\Tr{(\rho \pi(A)^* \pi(A))} &= \Tr{(\rho \pi(A^*) \pi(A))} \\
&= \omega (A^* A) \\
&= \big[\omega(A^* A)\big]^* \\
&= \big[\Tr{(\rho \pi(A)^* \pi(A))}\big]^* \\
&= \Tr{(\rho^* \pi(A)^* \pi(A))},
\end{split} \end{equation}
which implies $\rho = \rho^*$. It also has unit trace, as we can see form
\begin{equation} \Tr{(\rho)} = \langle \rho, \mathds{1} \rangle = \tilde{\rho}(\omega)(\mathds{1})= \omega (\mathds{1}) = 1. \end{equation}
Hence $\rho$ is a density operator, which connects the AQFT formalism with the standard Hilbert space language.

It is however important to note that there is a lot of freedom in choosing the representation, so it may still be difficult to interpret the results in AQFT. One possible natural choice of representation is provided by the Gelfand, Naimark, Segal (GNS) theorem.
\begin{theorem}[GNS theorem] \label{gns}
Let $\omega$ be a state on a *-algebra $\mathpzc{A}$. Then there is a representation $(\mathcal{H}_\omega, \mathcal{D}_\omega, \pi_\omega)$ of $\mathpzc{A}$ and a cyclic unit vector $\Omega_\omega \in \mathcal{D}_\omega$, such that
\begin{equation} \omega(A) = \braket{\Omega_\omega | \pi_\omega(A) \Omega_\omega} \: \forall A \in \mathpzc{A}.
\end{equation}
The quadruple $(\mathcal{H}_\omega, \mathcal{D}_\omega, \pi_\omega, \Omega_\omega)$ is unique up to unitary equivalence.

If $\mathpzc{A}$ is in fact a C*-algebra, then furthermore
\begin{enumerate}
	\item each $\pi_\omega(A)$ extends to a bounded operator on $\mathcal{H}_\omega$,
	\item $\omega$ is pure if and only if the representation is irreducible and
	\item if the representation is faithful, then $\|\pi_\omega(A)\| = \|A\|_\mathpzc{A}$.
\end{enumerate}
\end{theorem}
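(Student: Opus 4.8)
The plan is to follow the standard GNS construction. First I would build the pre-Hilbert space: starting from $\mathpzc{A}$ viewed as a complex vector space, define a sesquilinear form by $\langle A, B\rangle_\omega \coloneqq \omega(A^*B)$. Positivity of $\omega$ makes this a positive semi-definite form, and the Cauchy--Schwarz inequality (which holds for any such form) shows that $\mathcal{N}_\omega \coloneqq \{A \in \mathpzc{A} : \omega(A^*A) = 0\}$ is a linear subspace; one checks it is in fact a left ideal, using $\omega((BA)^*(BA)) = \omega(A^* B^* B A) \le \|B^*B\|\,\omega(A^*A)$ in the C*-case, or more directly via Cauchy--Schwarz in the general *-algebra case. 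Then $\mathpzc{A}/\mathcal{N}_\omega$ carries a genuine inner product, and I take $\mathcal{H}_\omega$ to be its completion, with $\mathcal{D}_\omega$ the image of $\mathpzc{A}$ (the equivalence classes $[A]$), which is dense by construction.

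Next I would define the representation by left multiplication: $\pi_\omega(A)[B] \coloneqq [AB]$. One must check this is well defined (i.e. that $A\mathcal{N}_\omega \subseteq \mathcal{N}_\omega$, which is exactly the left-ideal property), that it is linear in $A$ and multiplicative, that $\pi_\omega(\mathds{1})$ is the identity on $\mathcal{D}_\omega$, and that it satisfies the adjoint relation $\langle [C], \pi_\omega(A)[B]\rangle = \langle \pi_\omega(A^*)[C], [B]\rangle$, which is immediate from $\omega(C^* A B) = \omega((A^* C)^* B)$. This establishes conditions 1--5 of the representation definition with $\mathcal{D}_\omega$ as the common invariant domain. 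The cyclic vector is $\Omega_\omega \coloneqq [\mathds{1}]$; cyclicity is clear since $\pi_\omega(\mathpzc{A})\Omega_\omega = \{[A] : A \in \mathpzc{A}\} = \mathcal{D}_\omega$, and unit norm follows from $\omega(\mathds{1}) = 1$. The reproducing property is a one-line check: $\braket{\Omega_\omega | \pi_\omega(A)\Omega_\omega} = \langle [\mathds{1}], [A]\rangle_\omega = \omega(\mathds{1}^* A) = \omega(A)$.

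For uniqueness up to unitary equivalence, suppose $(\mathcal{H}', \mathcal{D}', \pi', \Omega')$ is another such quadruple. I would define $U : \mathcal{D}_\omega \to \mathcal{D}'$ by $U\pi_\omega(A)\Omega_\omega \coloneqq \pi'(A)\Omega'$; this is well defined and isometric because both inner products compute $\omega(A^*B)$ on the respective cyclic orbits, and it extends to a unitary $\mathcal{H}_\omega \to \mathcal{H}'$ by density, intertwining the representations and mapping $\Omega_\omega$ to $\Omega'$. For the C*-algebra addendum: boundedness of $\pi_\omega(A)$ follows from $\|\pi_\omega(A)[B]\|^2 = \omega(B^* A^* A B) \le \|A^* A\|\,\omega(B^* B) = \|A\|^2 \|[B]\|^2$, using the C* identity and the standard fact that $\omega(B^* A^* A B) \le \|A^*A\| \omega(B^*B)$ (because $\|A^*A\|\mathds{1} - A^*A$ is a positive element, hence of the form $D^*D$). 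Purity $\Leftrightarrow$ irreducibility is proved by relating non-trivial invariant subspaces (equivalently, projections in the commutant) to decompositions $\omega = \lambda\sigma + (1-\lambda)\mu$ via $\sigma(A) \propto \braket{\Omega_\omega | \pi_\omega(A) P \Omega_\omega}$ for a commutant projection $P$. The isometry statement for faithful representations uses the C*-norm formula $\|A\|^2 = \|A^*A\| = \sup\{\omega(A^*A) : \omega \text{ a state}\}$ together with the spectral-radius characterization of the norm on a commutative C*-subalgebra.

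The main obstacle is the C*-algebra portion rather than the bare GNS construction. Boundedness of $\pi_\omega(A)$ relies on the fact that in a C*-algebra every positive element has a square root (equivalently, that the order structure is compatible with the norm), which is a genuine theorem requiring the continuous functional calculus; and the purity--irreducibility equivalence requires care in passing between invariant closed subspaces, projections in the von Neumann algebra generated by $\pi_\omega(\mathpzc{A})$, and convex decompositions of the state, where one has to verify that the "pieces" obtained are again states and that the correspondence is faithful in both directions. For a Part III essay I would present the core construction in full and either cite \cite{algebras} or sketch the C*-specific claims, flagging the functional-calculus input as the key analytic ingredient.
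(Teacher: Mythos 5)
The paper does not actually prove this theorem; it states it and defers to \cite{aqft}, so there is no in-paper argument to compare against. Your proposal is the standard GNS construction (quotient by the Gelfand ideal $\mathcal{N}_\omega$, left multiplication on the dense domain, cyclic vector $[\mathds{1}]$, intertwining unitary for uniqueness) and is correct in outline, matching what the cited reference does; you also correctly identify the C*-specific items --- boundedness via the square root of $\|A^*A\|\mathds{1} - A^*A$, purity versus irreducibility via commutant projections, and isometry of faithful representations --- as the genuinely nontrivial analytic inputs. The only bookkeeping worth making explicit is that your construction satisfies all five conditions of the paper's representation definition for possibly unbounded $\pi_\omega(A)$, in particular that the range of each $\pi_\omega(A)$ lies in $\mathcal{D}_\omega$ and that $\pi_\omega(A)^*$ has domain containing $\mathcal{D}_\omega$; both follow at once from your adjoint relation, so this is not a gap.
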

Here cyclic means that $\mathcal{D}_\omega = \pi_\omega(\mathpzc{A}) \Omega_\omega$. For proof of Theorem \ref{gns}, see \cite{aqft}. From property 2) for C*-algebras in Theorem \ref{gns} we see that $\Omega_\omega$ is a pure state on a larger Hilbert space than the natural one for the field theory.

\subsubsection{Operations} \label{operations}
In this section I would like to show how operations on quantum fields are performed in the AQFT language. Following \cite{fewster}, I allow observers to interact with the quantum field, which I will call the system, via another quantum field, the probe, which will be coupled to the system in some compact spacetime region $K \subset M$.

To make this more concrete, we may consider the system to be a quantum field with action 
\begin{equation} S_S[\Psi] = \int d^4x \mathcal{L}_S(\Psi) \end{equation} 
and the probe a quantum field with action 
\begin{equation} S_P[\Phi] = \int d^4x \mathcal{L}_P(\Phi). \end{equation} 
Both of these Lagrangians may include self-interactions and overall are not restricted in any way. I am in principle not even limited to fields with actions that can be written in terms of a Lagrangian density; but I choose to do so for concreteness. Now I introduce an interaction between the fields of the form 
\begin{equation} \label{interaction} S_{int}[\Psi, \Phi] = \int d^4x \rho\Psi\Phi, \end{equation}
where $\supp{\rho} \subset K$. This is, once again, just a choice.

Now I would like to find the operation that is induced by this interaction. First, I will follow the analysis by Fewster and Verch \cite{fewster} to get the scattering morphism.

Let $\mathpzc{A}(\bm{M}) \otimes \mathpzc{B}(\bm{M})$ be the *-algebra of the uncoupled system-probe theory and $\mathpzc{C}(\bm{M})$ be the *-algebra of the coupled theory. Let the AQFT inclusion maps (as in condition \ref{compatibility} in section \ref{aqft}) for region $N$ in these theories be $\alpha_{\bm{M}; N} \otimes \beta_{\bm{M}; N}$ and $\gamma_{\bm{M}; N}$ respectively. Since the coupling is localized in $K$, the theory $\mathpzc{C}(\bm{M})$ will reduce to $\mathpzc{A}(\bm{M}) \otimes \mathpzc{B}(\bm{M})$ outside the causal hull $J^+(K) \cap J^-(K)$ of $K$. Hence, in any causally convex region $L \subset M \backslash J^+(K) \cap J^-(K)$ there will be an isomorphism 
\begin{equation} \chi_{\bm{L}}: \mathpzc{A}(\bm{L}) \otimes \mathpzc{B}(\bm{L}) \rightarrow \mathpzc{C}(\bm{L}). \end{equation}
In particular, an important role will be played by this morphism for $L = M^\pm(K)$, where as before, $M^\pm(K) = M \backslash J^\mp(K)$ are the ``in" (-) and ``out" (+) regions of $K$. Define $\chi^\pm \coloneqq \chi_{M^\pm(K)}$ for convenience. Define further in a similar fashion the inclusion maps for $M^\pm$ to be $\alpha^\pm, \beta^\pm, \gamma^\pm$ in the obvious way. Since $K$ is compact, regions $M^\pm$ contain a Cauchy surface\footnote{For a proof of this statement see e.g. \cite[Lemma A.4]{fewster_2012}. Geometrically however, this claim is quite well motivated, since e.g. the causal future and present of a compact region should contain a part of any timelike curve.} and hence $\alpha^\pm, \beta^\pm, \gamma^\pm, \chi^\pm$ are isomorphisms by the AQFT axioms.

This allows me to define, following \cite{fewster}, isomorphisms
\begin{equation} \kappa^\pm \coloneqq \gamma^\pm \circ \chi^\pm: \mathpzc{A}(\bm{M}^\pm) \otimes \mathpzc{B}(\bm{M}^\pm) \rightarrow \mathpzc{C}(\bm{M}) \end{equation}
and the retarded (+) and advanced (-) response maps
\begin{equation} \tau^\pm = \kappa^\pm \circ (\alpha^\pm \otimes \beta^\pm)^{-1}: \mathpzc{A}(\bm{M}) \otimes \mathpzc{B}(\bm{M}) \rightarrow \mathpzc{C}(\bm{M}). \end{equation}
The response maps relate the uncoupled theory to the coupled one through the identification at early (-) or late (+) times. This now gives the scattering morphism,
\begin{equation} \Theta \coloneqq (\tau^-)^{-1}\tau^+ \end{equation}
which is an automorphism of $\mathpzc{A}(\bm{M}) \otimes \mathpzc{B}(\bm{M})$ and relates the uncoupled theory identified with the coupled one at late times with the uncoupled theory identified with the coupled one at early times. The action of this morphism is similar to that of a scattering matrix in standard formulation of QFT.

I will also define the adjoint action of this map on the states through
\begin{equation} \omega(\Theta(O)) \eqqcolon \Theta^*(\omega)(O) \;\;\; \forall O \in \mathpzc{A}(\bm{M}) \otimes \mathpzc{B}(\bm{M}), \end{equation}
where $\omega$ is any state on the uncoupled algebra. The map $\Theta^*$ relates states at early times to states at late times. Now I will use the representation of states to relate this map to a quantum operation on a density matrix in a Hilbert space.

Let us pick a representation $(\mathcal{H}, \mathcal{D}, \pi)$ for $\mathpzc{A}(\bm{M}) \otimes \mathpzc{B}(\bm{M})$. In the way described in section \ref{states}, this defines a density matrix for each state on $\mathpzc{A}(\bm{M}) \otimes \mathpzc{B}(\bm{M})$. Hence, given a density matrix $\rho$ for an initial state $\omega$ on the decoupled system, the final state $\Theta^*(\omega)$ gives a new density matrix $\rho'$. This defines an operation $\mathcal{E}: \mathcal{D}(\mathcal{H}) \rightarrow \mathcal{D}(\mathcal{H})$, such that $\rho \mapsto \rho'$. Hence, picking a representation, we get the quantum operation corresponding to the interaction of the system and probe quantum fields.

In the following lemma, I give three crucial properties of the scattering morphism, which are proven in appendix A of \cite{fewster}.

\begin{lemma}[Proposition 3.1 in \cite{fewster}] \label{scattering_properties}
A scattering morphism $\Theta$ on the system-probe theory $\mathpzc{U}(\bm{M}) \coloneqq \mathpzc{A}(\bm{M}) \otimes \mathpzc{B}(\bm{M})$, which arises due to the interaction of the system with the probe in a compact region $K \subset M$, has the following properties:
\begin{enumerate}
	\item If $\hat{\Theta}$ is the scattering morphism which is obtained if we replace in the derivation the interaction region $K$ with a compact region $\hat{K} \supset K$, $\hat{\Theta} = \Theta$.
	\item If $L \subset K^\perp$, then $\Theta$ acts trivially on $\mathpzc{U}(\bm{M}; L)$.
	\item Suppose that $L^+ \subset M^+(K)$ and $L^- \subset M^-(K)$ are open, causally convex subsets, such that $L^+ \subset D(L^-)$. Then $\Theta \mathpzc{U}(\bm{M}; L^+) \subseteq \mathpzc{U}(\bm{M}; L^-)$.
\end{enumerate}
\end{lemma}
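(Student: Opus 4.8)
The plan is to reduce everything to the defining formula $\Theta=(\tau^-)^{-1}\tau^+$ with $\tau^\pm=\gamma^\pm\circ\chi^\pm\circ(\alpha^\pm\otimes\beta^\pm)^{-1}$, and to exploit two structural facts about the family of isomorphisms $\chi_{\bm L}\colon\mathpzc{U}(\bm L)\to\mathpzc{C}(\bm L)$ defined for causally convex $L\subset M\backslash(J^+(K)\cap J^-(K))$. First, each $\chi_{\bm L}$ is a morphism of AQFTs, so it carries $N$-localized elements to $N$-localized elements: $\chi_{\bm L}(\mathpzc{U}(\bm L;N))=\mathpzc{C}(\bm L;N)$ for causally convex $N\subseteq L$. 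Second, the $\chi_{\bm L}$ are compatible with the inclusion maps of condition \ref{compatibility}, so that for $L\subseteq L'$ one has $\chi_{\bm{L'}}\circ(\alpha_{\bm{L'};L}\otimes\beta_{\bm{L'};L})=\gamma_{\bm{L'};L}\circ\chi_{\bm L}$. I would also record the geometric preliminaries: $M^\pm(K)=M\backslash J^\mp(K)$ are open past/future sets hence causally convex, they contain Cauchy surfaces because $K$ is compact (so $\alpha^\pm,\beta^\pm,\gamma^\pm,\chi^\pm$ are isomorphisms by the time-slice axiom), and enlarging $K$ to $\hat K\supseteq K$ shrinks these regions, $M^\pm(\hat K)\subseteq M^\pm(K)$, while leaving the coupled theory $\mathpzc{C}(\bm M)$ untouched since $\supp\rho\subseteq K\subseteq\hat K$.

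Properties 1 and 2 then fall out as diagram chases. For property 2, $L\subseteq K^\perp=M\backslash J(K)$ avoids both $J^+(K)$ and $J^-(K)$, so $L\subseteq M^+(K)\cap M^-(K)$; restricting $\chi^\pm=\chi_{M^\pm(K)}$ down to $L$ using the second structural fact, both $\tau^+$ and $\tau^-$ restricted to $\mathpzc{U}(\bm M;L)$ coincide with the single canonical identification $\gamma_{\bm M;L}\circ\chi_{\bm L}\circ(\alpha\otimes\beta)^{-1}_{\bm M;L}$, hence $\Theta=(\tau^-)^{-1}\tau^+$ is the identity on $\mathpzc{U}(\bm M;L)$. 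For property 1, I would show $\hat\tau^\pm=\tau^\pm$ as maps $\mathpzc{U}(\bm M)\to\mathpzc{C}(\bm M)$: since $M^\pm(\hat K)$ is itself an admissible localization region contained in $M^\pm(K)$, the two structural facts together with functoriality of the inclusion maps force $\hat\tau^\pm$ and $\tau^\pm$ to agree on $\mathpzc{U}(\bm M;M^\pm(\hat K))$, and because $M^\pm(\hat K)$ contains a Cauchy surface the time-slice axiom gives $\mathpzc{U}(\bm M;M^\pm(\hat K))=\mathpzc{U}(\bm M)$; hence $\hat\Theta=(\hat\tau^-)^{-1}\hat\tau^+=(\tau^-)^{-1}\tau^+=\Theta$.

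Property 3 is the substantive case, and the plan is a three-step transport argument. Take $O\in\mathpzc{U}(\bm M;L^+)$. Step (a): since $L^+\subseteq M^+(K)$, the two structural facts give $\tau^+(O)\in\mathpzc{C}(\bm M;L^+)$, i.e. the retarded response map respects localization within the out region. Step (b): using $L^+\subseteq D(L^-)$ I want $\mathpzc{C}(\bm M;L^+)\subseteq\mathpzc{C}(\bm M;L^-)$; this comes from isotony ($L^+\subseteq D(L^-)$, after replacing $D(L^-)$ by an open causally convex enlargement if needed) followed by a relative time-slice identity $\mathpzc{C}(\bm M;D(L^-))=\mathpzc{C}(\bm M;L^-)$, obtained by applying the time-slice axiom inside the globally hyperbolic spacetime $\bm{D(L^-)}$ since $L^-$ contains a Cauchy surface of $D(L^-)$, then pushing forward via the compatibility map. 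Step (c): since $L^-\subseteq M^-(K)$, the advanced response satisfies $\tau^-(\mathpzc{U}(\bm M;L^-))=\mathpzc{C}(\bm M;L^-)$, so $(\tau^-)^{-1}$ carries $\mathpzc{C}(\bm M;L^-)$ back onto $\mathpzc{U}(\bm M;L^-)$; chaining, $\Theta(O)=(\tau^-)^{-1}(\tau^+(O))\in\mathpzc{U}(\bm M;L^-)$. I expect the main obstacle to be step (b): establishing the relative time-slice identity cleanly requires the genuinely geometric input that an open causally convex region contains a Cauchy surface of its own domain of dependence, and care that $D(L^-)$ (or a suitable open causally convex enlargement still lying in an admissible region) qualifies as a localization region. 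The other, more bookkeeping-heavy, point is being precise about the first structural fact — that $\chi_{\bm L}$ and hence $\tau^\pm$ intertwine the localization structure — since this underpins steps (a) and (c) and is used implicitly in properties 1 and 2 as well.
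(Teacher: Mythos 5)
The paper itself omits the proof of this lemma, deferring to Appendix A of \cite{fewster} and offering only heuristic commentary, so there is no in-text argument to compare against. Your proposal correctly reconstructs the standard Fewster--Verch argument: reduce to $\Theta=(\tau^-)^{-1}\tau^+$, use the naturality of the identifications $\chi_{\bm L}$ with respect to localization and inclusion, and invoke the time-slice property (globally for property 1, relative to $D(L^-)$ for property 3); the one technical point you rightly flag --- that $L^-$ contains a Cauchy surface of a suitable open causally convex version of $D(L^-)$ so that the relative time-slice identity $\mathpzc{C}(\bm M; D(L^-))=\mathpzc{C}(\bm M; L^-)$ applies --- is exactly where the geometric care is needed, and is handled the same way in the cited source.
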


These properties follow from the AQFT axioms from section \ref{aqft}. I will omit the proof here, but give some insight into these properties.

The first property shows that there is an ambiguity in the definition of the interaction region K. For example, returning to the example of an interaction in eq. \ref{interaction}, we can choose $K$ to be any region containing $\supp{\rho}$, without changing the dynamics.

Causality makes an appearance through the second property. Causality demands, that the interaction localized in $K$ cannot influence the behaviour in $K^\perp$. This means that when we map the observables in $K^\perp$ from the uncoupled theory identified with the coupled one at late times to the uncoupled theory identified with the coupled one at early times, we should get the same observable.

The third property says that physics in $L^+$ is completely determined by the physics in $L^-$, as it should be, since $L^+$ is a part of the Cauchy development of $L^-$.

\subsection{The FV framework} \label{FV-framework}
In this section I will summarize the Fewster-Verch (FV) framework, presented in \cite{fewster}. This framework introduces measurement theory into AQFT by proposing a scheme that allows an observer with access to a probe quantum field to measure an observable on the system quantum field. The projection postulate is assumed on the probe measurement, so this approach does not attempt to solve the measurement problem. However, its importance lies in how elegantly it deals with the causality and locality problem in QFT.

\subsubsection{The measurement scheme} \label{FV-measurement_scheme}
In the notation from section \ref{aqft}, we have two quantum fields: system and probe, which are coupled only in a compact region $K$. We have the uncoupled *-algebra $\mathpzc{A}(\bm{M}) \otimes \mathpzc{B}(\bm{M})$ and the coupled *-algebra $\mathpzc{C}(\bm{M})$. We have the maps $\tau^\pm, \kappa^\pm$ and the scattering morphism $\Theta$ as before. Suppose that our system and probe states are uncorrelated at early times. That means that
\begin{equation} (\tau^-)^*\tilde{\omega} = \omega \otimes \sigma, \end{equation}
where $\tilde{\omega}$ is a state on $\mathpzc{C}(\bm{M})$, $\omega$ a state on $\mathpzc{A}(\bm{M})$ and $\sigma$ a state on $\mathpzc{B}(\bm{M})$. Suppose we measure the observable $B \in \mathpzc{B}(\bm{M})$ on the probe at late times. This corresponds to the observable
\begin{equation} \tilde{B} = \tau^+ (\mathds{1} \otimes B) \end{equation}
on $\mathpzc{C}(\bm{M})$. The expectation value of such measurement is hence
\begin{equation} 
\begin{split}
\tilde{\omega}(\tilde{B}) &= \big[(\tau^-)^{-1}\big]^* (\omega \otimes \sigma) \big(\tau^+ (\mathds{1} \otimes B)\big) \\
&= (\omega \otimes \sigma) \big((\tau^-)^{-1} \tau^+(\mathds{1} \otimes B)\big) \\
&= (\omega \otimes \sigma) \big(\Theta(\mathds{1} \otimes B)\big).
\end{split}
\end{equation}

Now I can define the concept of an induced observable. This is the observable $A \in \mathpzc{A}(\bm{M})$ that we would like to get information about through the measurement of $B$. Hence, we require that our measurement scheme effectively just evaluates $A$ on the system state. More precisely, demand
\begin{equation} \label{induced_obs} \tilde{\omega}(\tilde{B}) = \omega(A). \end{equation}
Fewster and Verch find a unique solution of eq. \ref{induced_obs} by defining two maps. First, define $\eta_\sigma: \mathpzc{A}(\bm{M})\otimes\mathpzc{B}(\bm{M}) \rightarrow \mathpzc{A}(\bm{M})$ by 
\begin{equation} \eta_\sigma(A \otimes B) = \sigma(B) A \end{equation}
and extending by linearity. 

Further, define the map $\varepsilon_\sigma: \mathpzc{B}(\bm{M}) \rightarrow \mathpzc{A}(\bm{M})$ by
\begin{equation} \varepsilon_\sigma (B) = (\eta_\sigma \circ \Theta)(\mathds{1} \otimes B). \end{equation}

Now we can check
\begin{equation}
\begin{split}
\omega\big(\varepsilon_\sigma(B)\big) &= \omega\big((\eta_\sigma \circ \Theta)(\mathds{1} \otimes B)\big) \\
&= (\omega \otimes \sigma)\big(\Theta(\mathds{1} \otimes B)\big) \\
&= \tilde{\omega}(\tilde{B}),
\end{split}
\end{equation}
so $\varepsilon_\sigma(B) \in \mathpzc{A}(\bm{M})$ is the induced system observable of the probe observable $B$. This construction therefore provides us with a measurement scheme. 

Notice that the induced observable depends not only on the probe observable $B$, but also on the probe initial state $\sigma$.

\paragraph{Localization of the induced observable.} Fewster and Verch in \cite{fewster} show that the induced observable can be localized in any connected open causally convex set containing $K$. 
\begin{theorem}[Localization of induced observables, Theorem 3.3 in \cite{fewster}]
For any probe observable $B \in \mathpzc{B}(\bm{M})$, the induced observable $\varepsilon_\sigma(B)$ can be localized in any connected open causally convex set containing the interaction region $K$.
\end{theorem}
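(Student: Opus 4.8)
The plan is to read off the localization region from the Haag property (axiom 5 in Section~\ref{aqft}). Concretely, I would show that $\varepsilon_\sigma(B)$ commutes with every element of $\mathpzc{A}(\bm{M};N)$ for every causally convex open $N \subseteq K^\perp$. Since $K$ is compact, the Haag property then yields immediately that $\varepsilon_\sigma(B) \in \mathpzc{A}(\bm{M};L)$ for any connected open causally convex $L \supseteq K$, which is exactly the assertion. So the whole proof reduces to verifying this one commutation statement, and everything else is the AQFT axioms doing the work.

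The first ingredient I would establish is that the partial expectation $\eta_\sigma$ is a bimodule map over $\mathpzc{A}(\bm{M})$. On simple tensors, $\eta_\sigma\bigl((A_1\otimes\mathds{1})(A\otimes B)\bigr) = \sigma(B)\,A_1 A = A_1\,\eta_\sigma(A\otimes B)$, and analogously for right multiplication; extending by linearity gives $\eta_\sigma\bigl((A_1\otimes\mathds{1})X\bigr) = A_1\,\eta_\sigma(X)$ and $\eta_\sigma\bigl(X(A_1\otimes\mathds{1})\bigr) = \eta_\sigma(X)\,A_1$ for all $X \in \mathpzc{U}(\bm{M})$ and $A_1 \in \mathpzc{A}(\bm{M})$. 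The second ingredient is that $\Theta(\mathds{1}\otimes B)$ commutes with $A'\otimes\mathds{1}$ whenever $A' \in \mathpzc{A}(\bm{M};N)$ with $N \subseteq K^\perp$. Here I would invoke property 2 of Lemma~\ref{scattering_properties}: since $N \subseteq K^\perp$, the morphism $\Theta$ acts trivially on $\mathpzc{U}(\bm{M};N)$, so $\Theta(A'\otimes\mathds{1}) = A'\otimes\mathds{1}$. Because $A'\otimes\mathds{1}$ and $\mathds{1}\otimes B$ commute in $\mathpzc{U}(\bm{M})$ (they sit in different tensor legs) and $\Theta$ is an automorphism, $(A'\otimes\mathds{1})\,\Theta(\mathds{1}\otimes B) = \Theta\bigl((A'\otimes\mathds{1})(\mathds{1}\otimes B)\bigr) = \Theta\bigl((\mathds{1}\otimes B)(A'\otimes\mathds{1})\bigr) = \Theta(\mathds{1}\otimes B)\,(A'\otimes\mathds{1})$.

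Applying $\eta_\sigma$ to this last identity and using the bimodule property gives $A'\,\varepsilon_\sigma(B) = \varepsilon_\sigma(B)\,A'$, since $\varepsilon_\sigma(B) = \eta_\sigma\bigl(\Theta(\mathds{1}\otimes B)\bigr)$. As $A'$ and $N \subseteq K^\perp$ were arbitrary, the Haag property closes the argument.

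The real content is hidden inside Lemma~\ref{scattering_properties} (whose proof is deferred to the cited appendix) and inside the availability of the Haag property as an axiom, so the argument above is mostly bookkeeping. The one genuinely delicate point within the proof itself is the transfer of a commutation relation from the full system--probe algebra $\mathpzc{U}(\bm{M})$ down to the system algebra $\mathpzc{A}(\bm{M})$, which is precisely what the bimodule property of $\eta_\sigma$ accomplishes; I would be careful that the extension from simple tensors to general elements of $\mathpzc{U}(\bm{M})$ is clean and that $\eta_\sigma$ genuinely lands in $\mathpzc{A}(\bm{M})$ and not merely in some completion of it, so that ``$\varepsilon_\sigma(B) \in \mathpzc{A}(\bm{M};L)$'' is a statement about the algebra we want.
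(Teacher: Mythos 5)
Your argument is correct and follows essentially the same route as the paper's own proof: both reduce the claim to showing $[\varepsilon_\sigma(B),A]=0$ for $A\in\mathpzc{A}(\bm{M};L)$ with $L\subseteq K^\perp$, using the bimodule property of $\eta_\sigma$ to pull the commutator into $\mathpzc{U}(\bm{M})$, property 2 of Lemma \ref{scattering_properties} together with the automorphism property of $\Theta$ to reduce it to $[\mathds{1}\otimes B, A\otimes\mathds{1}]=0$, and then the Haag property to localize. You merely spell out explicitly the bimodule step that the paper's chain of equalities leaves implicit.
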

\begin{proof}
Suppose a region $L \subseteq K^\perp$, $A \in \mathpzc{A}(\bm{M}; L)$ and $B \in \mathpzc{B}(\bm{M})$. Now
\begin{equation}\begin{split}
[\varepsilon_\sigma(B), A] &= [\eta_\sigma(\Theta(\mathds{1} \otimes B)), A] \\
&= \eta_\sigma[\Theta(\mathds{1} \otimes B), A \otimes \mathds{1}] \\
&= \eta_\sigma(\Theta[\mathds{1} \otimes B, A \otimes \mathds{1}]) \\
& = 0,\\
\end{split}\end{equation}
where I used point 2. in Lemma \ref{scattering_properties} to get the third equality. Therefore by the Haag property \ref{haag} in section \ref{aqft}, $\varepsilon_\sigma(B)$ can be localized in any connected open causally convex region containing $K$.
\end{proof}

\paragraph{Effect valued measure} The concept of a positive operator valued measure (POVM) is generalized to the *-algebraic setting by the notion of an effect valued measure (EVM). 
\begin{definition}[Effect valued measure] \label{evm}
Suppose $\mathpzc{A}(\bm{M})$ is a *-algebra and $\chi$ is a \textsigma-algebra of subsets of a set $\Omega$. An \textit{effect valued measure} (EVM) is a map $E: \chi \rightarrow \mathpzc{A}(\bm{M})$, which satisfies
\begin{enumerate}
	\item $E(X) \geq 0 \; \forall X \in \chi$,
	\item $E(\Omega) = \mathds{1}$ and
	\item given a set $\{X_i\}_i$, such that $X_i \in \chi \; \forall i$ and $X_i \cap X_j = \O \; \forall i \neq j$, 
	\begin{equation} E(\bigcup\limits_i X_i) = \sum\limits_i E(X_i). \end{equation}
\end{enumerate}
Call an EVM a \textit{projective effect valued measure} (PEVM), if furthermore
\begin{enumerate}[resume]
	\item $E(X)E(X) = E(X) \; \forall X \in \chi$ and
	\item $E(X)E(Y) = E(X \cap Y) \; \forall X, Y \in \chi$.
\end{enumerate}
\end{definition}
Physical reason for using EVMs is to allow for simultaneous measurement of some non-commuting observables (see \cite{dorofeev} and \cite{simultaneous}). For a thorough exposition see \cite{measurement}. The set $\Omega$ represents possible values of some observable $A \in \mathpzc{A}(\bm{M})$; given $X \in \chi$, the operator $E(X)$ represents the experimental result that the value of $A$ lies in the subset $X \subseteq \Omega$, in the sense that for a system in state $\omega$, the value of $A$ will be in $X$ with probability $\omega(E(X))$. This is analogous to how POVMs are used in quantum theory in general. 

Note that given an EVM on the probe, the induced observable on the system will also be an EVM, since $\varepsilon$ is linear and positivity preserving.

\paragraph{The projection postulate} Suppose observables $A \in \mathpzc{A}(\bm{M})$ and $B \in \mathpzc{B}(\bm{M})$ with associated EVMs $E_A: \chi_A \rightarrow \mathpzc{A}(\bm{M})$ and $E_B: \chi_B \rightarrow \mathpzc{B}(\bm{M})$ respectively. We would like to know what is the probability that the value of $A$ will be measured to be in $X \in \chi_A$, given that the value of $B$ has been measured to be in $Y \in \chi_B$, if the system and probe are initially in a state $\omega \otimes \sigma$. By the definition of conditional probability,\footnote{Here I am abusing the notation $A \in X$ to denote that the value of the observable $A$ is in $X$.}
\begin{equation} \begin{split} \label{projection_post}
P(A \in X | &B \in Y) = \frac{P(A \in X \& B \in Y)}{P(B \in Y)} \\
&= \frac{(\omega \otimes \sigma)(\Theta(E_A(X) \otimes E_B(Y)))}{\sigma(E_B(Y))} \\
&= \frac{J_\sigma(E_B(Y))(\omega)(E_A(X))}{J_\sigma(E_B(Y))(\omega)(\mathds{1})},
\end{split}\end{equation}
where $J_\sigma(B)(\omega)$ for $B \in \mathpzc{B}(\bm{M})$, $\omega$ a state on $\mathpzc{A}(\bm{M})$ and $\sigma$ a state on $\mathpzc{B}(\bm{M})$ is defined by its action on any $A \in \mathpzc{A}(\bm{M})$
\begin{equation} J_\sigma(B)(\omega)(A) \coloneqq (\omega \otimes \sigma)(\Theta(A \otimes B)). \end{equation}
We call the map $J_\sigma(B)$ the \textit{pre-instrument}.

We can interpret eq. \ref{projection_post} in the language of the projection postulate. The updated state after the measurement of $B \in Y$ becomes
\begin{equation} \tilde{\omega} \coloneqq \frac{J_\sigma(E_B(Y))(\omega)}{J_\sigma(E_B(Y))(\omega)(\mathds{1})}, \end{equation}
such that in a subsequent measurement of $A$, the probability of obtaining a value in $X \in \chi_A$ is given by
\begin{equation} P(A \in X) = \tilde{\omega}(E_A(X)). \end{equation}
It is clear that $J_\sigma(E_B(Y))(\omega)$ is the unnormalized updated system state.

\subsubsection{Causality in the FV framework} \label{FV-causality}
The main point of the FV framework is that the system-probe interaction is localized in $K$. Hence, recovering a quantum operation from this interaction in the sense of section \ref{operations} provides us with a notion of locality in AQFT. 

As was the case for quantum mechanics in section \ref{QM}, \textit{causality} emerges from \textit{locality} in quantum field theory too, as has been shown in \cite{bostelmann} and in \cite{fewster}. In this section, I summarize these results. 

\paragraph{Causality and post-selection} Consider an observer $\mathcal{O}_1$, who measures a probe observable associated with an EVM $E_B: \chi_B \rightarrow \mathpzc{B}(\bm{M})$, and an observer $\mathcal{O}_2$, who measures a system observable associated with an EVM $E_A: \chi_A \rightarrow \mathpzc{A}(\bm{M})$. If the observers are not allowed any other communication, observer $\mathcal{O}_2$ cannot know the result of the measurement $\mathcal{O}_1$ performs. Hence $\mathcal{O}_2$ can be considered to perform their measurement on the sum of all the unnormalized updated states corresponding to a mutually exclusive complete set of possible results that $\mathcal{O}_1$ can obtain. This can be written, using linearity of the pre-instruments, as
\begin{multline} \tilde{\omega} = \sum_{Y_i \in \kappa} J_\sigma(E_B(Y_i))(\omega) = \\
= J_\sigma(E_B(\Omega_B))(\omega) = J_\sigma(\mathds{1})(\omega),\end{multline}
where $\Omega_B$ is the set corresponding to $\chi_B$ and $\kappa$ is a subset of $\chi_B$ such that for all $\alpha \neq \beta \in \kappa$ we have that $\alpha \cap \beta = \O$ and $\bigcup_{\alpha \in \kappa} \alpha = \Omega_B$. This state is normalized, since
\begin{equation}\begin{split} \tilde{\omega}(\mathds{1}) &= J_\sigma(\mathds{1})(\omega)(\mathds{1}) \\
&= (\omega \otimes \sigma)(\Theta(\mathds{1} \otimes \mathds{1})) \\
&= (\omega \otimes \sigma)(\mathds{1} \otimes \mathds{1}) \\
&= 1. 
\end{split}\end{equation}

Consider what happens if in fact $E_A(X) \in \mathpzc{A}(\bm{M}; L) \; \forall X \in \chi_A$, where $L \subset K^\perp$. Now
\begin{equation}\begin{split}
\tilde{\omega}(E_A(X)) &= J_\sigma(\mathds{1})(\omega)(E_A(X)) \\
&= (\omega \otimes \sigma)(\Theta(E_A(X) \otimes \mathds{1})) \\
&= (\omega \otimes \sigma)(E_A(X) \otimes \mathds{1}) \\
&= \omega(E_A(X)),
\end{split}\end{equation}
so the probability distribution of the measurement $\mathcal{O}_2$ performs is unchanged by the fact that $\mathcal{O}_1$ performed their measurement. This is in agreement with causality, since now the observers are measuring observables that can be localized in causally disjoint regions.

\paragraph{Sorkin scenario in the FV framework.} First, it is necessary to formulate how observers in different causal relationships will be represented in the AQFT and FV language. In particular, we are interested in the Sorkin scenario.

Each observer $\mathcal{O}_i$, $i \in \{A, B, C\}$ has access to a probe AQFT with an associated *-algebra $\mathpzc{B}_i(\bm{M})$. Each probe interacts with the system AQFT in a compact spacetime region $K_i \subseteq O_i$. Similarly to section \ref{operations}, this gives us a coupled and an uncoupled theory. I will use a shorthand notation $\mathpzc{D}(\bm{M}) = \mathpzc{A}(\bm{M}) \otimes \mathpzc{B}_A(\bm{M}) \otimes \mathpzc{B}_B(\bm{M}) \otimes \mathpzc{B}_C(\bm{M})$ for the total decoupled *-algebra and call the coupled *-algebra $\mathpzc{C}(\bm{M})$. 

Because of the spacetime relationships of the regions $O_i$, we can place two non-intersecting spacelike Cauchy surfaces $\Sigma_1, \Sigma_2$, such that 
\begin{enumerate}
	\item $O_A \subset J^-(\Sigma_1)$ and $O_A \subset J^-(\Sigma_2)$,
	\item $O_B \subset J^+(\Sigma_1)$ and $O_B \subset J^-(\Sigma_2)$,
	\item $O_C \subset J^+(\Sigma_1)$ and $O_C \subset J^+(\Sigma_2)$.
\end{enumerate}
Dividing $M$ into regions $M_A = J^-(\Sigma_1), M_B = J^+(\Sigma_1) \cap J^-(\Sigma_2), M_C = J^+(\Sigma_2)$, we obtain *-algebras $\mathpzc{D}(\bm{M}; M_i)$. By the time-slice property of AQFT
\begin{equation} \label{timeslice_Sorkin_scn} \mathpzc{D}(\bm{M}; M_i) = \mathpzc{D}(\bm{M}) \; \; \; \forall i \in \{A, B, C\}. \end{equation}
Therefore, we can use the procedure from section \ref{operations} in each region $M_i$ to get scattering morphisms $\tilde{\Theta}_i$, which are automorphisms on $\mathpzc{A}(\bm{M}_i) \otimes \mathpzc{B}_i(\bm{M}_i)$ that relate the decoupled theory identified with the coupled one in the region $M^+(K_i) \cap M_i$ to the decoupled theory identified with the coupled one in the region $M^-(K_i) \cap M_i$. Now define $\Theta_i$ to be the automorphisms on $\mathpzc{D}(\bm{M}_i)$, which act trivially on $\mathpzc{B}_{j \neq i}(\bm{M}_i)$ and as $\tilde{\Theta}_i$ on $\mathpzc{A}(\bm{M}_i) \otimes \mathpzc{B}_i(\bm{M}_i)$. By eq. \ref{timeslice_Sorkin_scn}, these are also automorphisms on $\mathpzc{D}(\bm{M})$. This is illustrated in Figure \ref{fig_sorkin_2}.

If Charlie measures an induced system observable $C$ using his probe, given an initial state of the system and Alice's and Bob's probes $\omega \otimes \sigma_A \otimes \sigma_B$, the expectation value will be
\begin{equation} 
\langle C \rangle = (\omega \otimes \sigma_A \otimes \sigma_B)\big((\Theta_A \circ \Theta_B)(C \otimes \mathds{1} \otimes \mathds{1})\big).
\end{equation}
This relation is not immediately obvious and it has been rigorously proven in \cite{bostelmann}. The reasoning relies on the idea to combine the probe theories into a single probe theory with *-algebra $\mathpzc{B}(\bm{M}) = \mathpzc{B}_A(\bm{M}) \otimes \mathpzc{B}_B(\bm{M}) \otimes \mathpzc{B}_C(\bm{M})$, interaction region $K = K_A \cup K_B \cup K_C$, scattering morphism $\Theta$ and initial probe state $\sigma = \sigma_A \otimes \sigma_B \otimes \sigma_C$. We are measuring $\tilde{B}_C = \mathds{1} \otimes \mathds{1} \otimes B_C$ at late times. Here $B_C$ is the observable on Charlie's probe that induces the observable $C$ on the system, given the initial probe state $\sigma_C$. From the discussion above, we can write $\Theta = \Theta_A \circ \Theta_B \circ \Theta_C$, because of the causal ordering imposed on the observers by $\Sigma_1, \Sigma_2$. 

Now
\begin{equation} \begin{split}
\langle C \rangle &= \big(\omega \otimes \sigma)(\Theta(\mathds{1} \otimes \tilde{B}_C)\big) \\
&= (\omega \otimes \sigma_A \otimes \sigma_B \otimes \sigma_C)\\
&\;\;\;\;\;\;\;\;\;\;\;\;\;\;\big((\Theta_A \circ \Theta_B \circ \Theta_C)(\mathds{1} \otimes \mathds{1} \otimes \mathds{1} \otimes B_C)\big) \\
&= (\omega \otimes \sigma_A \otimes \sigma_B)\big((\Theta_A \circ \Theta_B) (C \otimes \mathds{1} \otimes \mathds{1})),
\end{split} \end{equation}
where trivial action of $\Theta_i$ on the probe theories with index $j \neq i$ is used, together with the fact the $C$ is the system observable induced by $B_C$ using Charlie's probe coupling and initial probe state $\sigma_C$. 

Causality says, that results of measurements obtained by Charlie should be independent of the operations preformed by Alice. Hence,
\begin{multline} \label{causality_FV} (\Theta_A \circ \Theta_B)(C \otimes \mathds{1} \otimes \mathds{1}) =  \Theta_B (C \otimes \mathds{1} \otimes \mathds{1}) \\
\forall C \in \mathpzc{A}(\bm{M}; O_C). \end{multline}
The main result of \cite{bostelmann} is that this is always the case. Here I give a sketch of the proof. 

Consider the region $T \coloneqq J^-(\bar{O}_C) \cap \Sigma_1$, where $\bar{O}_C$ is the compact closure of $O_C$. Since $O_C$ and $O_A$ are spacelike separated, we expect that $T \cap (J^+(\bar{O}_A) \cap \Sigma_1) = \O$, and hence that $T \subset K_A^\perp$. Furthermore, by definition, $\bar{O}_C \subseteq D(T) \subseteq D(M^-(O_B) \cap O_A^\perp)$. Since $C \otimes \mathds{1} \otimes \mathds{1}$ can be localized in $O_C$, using property 3 of the scattering morphism in Lemma \ref{scattering_properties}, $\Theta_B(C \otimes \mathds{1} \otimes \mathds{1})$ can be localized in $M^-(O_B) \cap O_A^\perp$. Now by property 2 in Lemma \ref{scattering_properties}, $\Theta_A$ has to act trivially on $\Theta_B(C \otimes \mathds{1} \otimes \mathds{1})$, which proves that eq. \ref{causality_FV} is always satisfied in the Sorkin scenario. The geometric arguments in this proof can be formalized, which is the content of Lemmas 3 and 4 in \cite{bostelmann}.

\paragraph{Comments on causality in the FV framework.} From the discussion above, we see that if we formalize quantum operations as interactions between system and probe quantum fields in the FV language, causality is always respected in the Sorkin scenario. It is shown in \cite{bostelmann} that this result generalizes beyond three observers by the process of causal factorization to any collection of causally orderable observers. There are some other important points about causality however. 

Firstly, it is worth noting that if the probe observable can be localized in $K^\perp$, the induced observable is a multiple of the identity (Theorem 3.3 in \cite{fewster}). This can be checked explicitly. Suppose $B \in \mathpzc{B}(\bm{M}; L)$, where $L \in K^\perp$. Now
\begin{equation} \varepsilon_\sigma(B) = \eta_\sigma(\Theta(\mathds{1} \otimes B)) = \eta_\sigma(\mathds{1} \otimes B) = \sigma(B) \mathds{1}, \end{equation}
where I used property 2 of the scattering morphism from Lemma \ref{scattering_properties}. This means that measuring probe observables in the causal complement of the interaction region gives us no information about the system. 

Secondly, one might be concerned about the situation when the probe observables $B_1 \in \mathpzc{B}(\bm{M}; N_1)$ and $B_2 \in \mathpzc{B}(\bm{M}; N_2)$ are measured, such that $N_1$ and $N_2$ are causally disjoint. The induced observables can be both localized in $K$. Einstein causality \ref{einstein} in section \ref{aqft} demands that $[B_1, B_2] = 0$. Does this mean we are restricted to measuring only commuting observables on the system in this setup? Luckily, $\varepsilon_\sigma$ is not an isomorphism, due to the definition of $\eta_\sigma$, so the fact that $[B_1, B_2] = 0$ does \textit{not} imply that the induced observables have to commute too. 

Finally, I would like to discuss the situation when the observers cannot in fact be causally ordered, so that causal factorization cannot be applied directly. Suppose in particular that there are two observers, $\mathcal{O}_1$ and $\mathcal{O}_2$, with probes associated with *-algebras $\mathpzc{B}_1(\bm{M})$ and $\mathpzc{B}_2(\bm{M})$ respectively, coupled to the system AQFT with *-algebra $\mathpzc{A}(\bm{M})$ in compact regions $K_1$ and $K_2$ respectively. Furthermore, suppose that there exists no spacelike Cauchy surface $\Sigma$, such that $K_1 \subset J^-(\Sigma)$ and $K_2 \subset J^+(\Sigma)$ or vice versa, so the interactions cannot be causally ordered.\footnote{Note that $K_1$ and $K_2$ can overlap.} Suppose that $\mathcal{O}_1$ measures an EVM $E_1: \chi_1 \rightarrow \mathpzc{B}_1(\bm{M})$ ad that $\mathcal{O}_2$ measures an EVM $E_2: \chi_2 \rightarrow \mathpzc{B}_2(\bm{M})$. 

To resolve this problem we can, as in the previous section, combine the probes to get a single probe associated with a *-algebra $\mathpzc{B} = \mathpzc{B}_1 \otimes \mathpzc{B}_2$, coupled with the system in a compact region $K$, such that $K_i \subseteq K$ for $i \in \{1, 2\}$. We can now consider measuring the joint EVM $E: \chi_1 \times \chi_2 \rightarrow \mathpzc{B}_1 \otimes \mathpzc{B}_2$, such that $E(X \times Y) = E_1(X) \otimes E_2(Y) \; \forall X \in \chi_1, Y \in \chi_2$. We managed to combine the observers to get a single observer, which can now be causally factorized in a larger collection of observers.

In this way, given a collection of observers who cannot be causally ordered, we can always combine some of these observers to get a smaller collection that can be causally ordered and we use causal factorization to show that causality is obeyed. Note that if two observers cannot be causally ordered, it means that there are timelike curves between their corresponding regions in either direction, so causality doesn't constrain operations that these observers can perform. Using this argument, it is clear that causal factorization is enough to ensure proper causal behaviour in the FV framework. 

\end{multicols}

\begin{figure}
\centering
\includegraphics{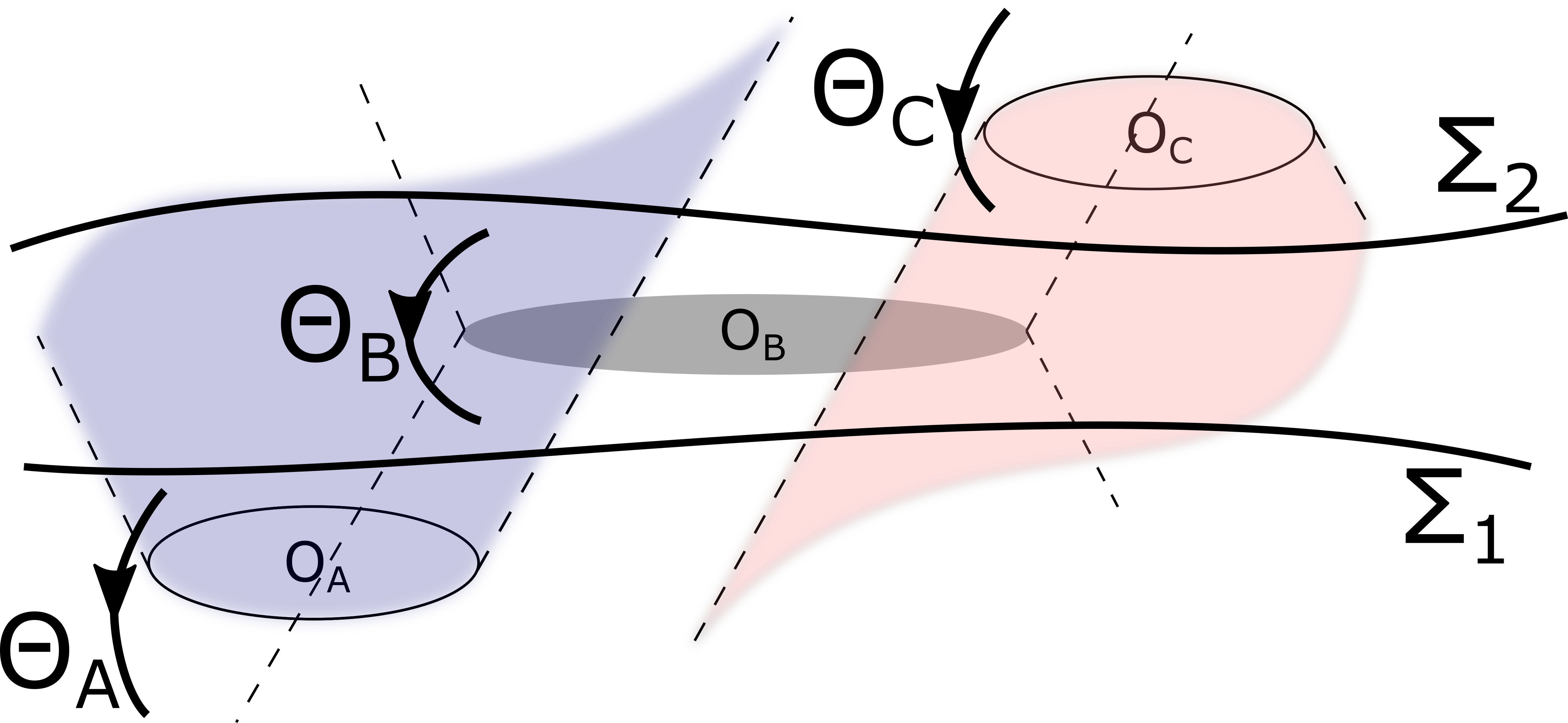}
\caption{\textbf{The Sorkin scenario in the FV framework.} $O_A, O_B$ and $O_C$ are regions containing the interaction regions of the probes controlled by Alice, Bob and Charlie respectively. $\Sigma_i$ are Cauchy surfaces. This figure illustrates how the scattering morphisms $\Theta_i$ act on the uncoupled theory. The dashed lines are null curves. The blue region is $J^+(O_A)$, the red region is $J^-(O_C)$. Time is vertically upwards, one spatial dimension is on the horizontal axis.}
\label{fig_sorkin_2}
\end{figure}
\hrulefill
\begin{multicols}{2}

\section{Sum over histories approach} \label{sum_over_hist}
In this section I will study causality in the language of Feynman path integrals and sum over histories. First I will introduce this idea in the case of a free particle and then move to QFT. Path integral approach has been suggested in \cite{sorkin} and \cite{borsten} as the fundamental picture that would iron out all difficulties with causality standard exposition of QFT contains. However, I am not aware of any explicit attempts to do so and I will show in this section why I don't think this approach provides much new insight into the problem. 

In sections \ref{free_particle} to \ref{decoherent_hist} I will develop ideas and notations from the literature. In section \ref{causality_in_decoherent_hist} I give my own definitions and results, applying the preceding material to AQFT and the study of causality.

\subsection{Free particle} \label{free_particle}
\textit{[In this section I am working in natural units $\hbar = c = 1$.]}

\vspace{2.5mm}

\noindent The path integral formulation of quantum mechanics puts paths through configuration space in the spotlight. See e.g. chapter 1 of \cite{brown_qft} for a thorough introduction. For a free particle in 1 spatial dimension $q$, given at $t=0$ an initial wavefunction $\psi(q, 0)$, the path integral yields a wavefunction at time $T>0$ as
\begin{equation} \label{pathint} 
\psi(q, T) = \int dq' \psi(q', 0) \int_{[q, q']} \mathcal{D}q \: e^{iS[q]},
\end{equation}
where $S[q(t)]$ is the classical action. The path integral $\int_{[q, q']} \mathcal{D}q$ denotes an integral over all paths $q(t)$ through the 1-dim. configuration space of the particle parametrized by time, such that $q(0) = q'$ and $q(T) = q$.

This time evolution works in the free case, where there is no measurement in the time interval $(0, T)$. However, if the particle is measured to be in some regions $\Delta^1_{\alpha_1}, \Delta^2_{\alpha_2}, ..., \Delta^n_{\alpha_n}$ at times $t_1, t_2, ..., t_n$, such that $0 < t_i < T \; \forall i = 1, ..., n$, the projection postulate translates to the path integral approach by restricting the integration to paths in the set $C_\alpha$, which satisfy the boundary conditions as before, but furthermore $q(t_i) \in \Delta^i_{\alpha_i} \; \forall i = 1, ..., n$. Here, $\alpha = (\alpha_1, ..., \alpha_n)$. Write this as
\begin{equation} \psi(q, T) = \braket{q | \psi(T)} = \int dq' \psi(q', 0) \int_{C_\alpha} \mathcal{D}q \: e^{iS[q]}. \end{equation}
For a thorough analysis of this principle, see \cite{caves}. 

It is also possible to generalize this to measurements of observables other than position, as explained in \cite{hartle}. Suppose a measurement of a functional $F[q(t)]$. If the measured value is within a real interval $\Delta_\alpha$, the set of paths over which we integrate becomes
\begin{equation} C_\alpha = \{q(t): F[q(t)] \in \Delta_\alpha, q(0) = q', q(T) = q\}. \end{equation}

The position $q$ is still the dynamical variable rather than a spacetime coordinate (recall section \ref{dyn_vars_vs_coords}), so the theory remains non-relativistic. It is parametrized by a single global time variable and the time order is the same for all observers.

\subsection{QFT} \label{path_int_qft}
We can generalize the above approach by identifying the configuration space of a scalar quantum field $\phi: M \rightarrow \mathbb{R}$. The process is described in detail in chapters 2 and 3 of \cite{brown_qft}. Some care should be taken when generalizing to spinor fields or gauge fields, but in this discussion this is an unnecessary complication. See e.g. \cite{banks_qft} for a thorough exposition. It will be beneficial to chose coordinates, such that the first coordinate (call it time $t$) generates timelike curves if we fix the other three and inherits the time orientation from $\bm{M}$. This is always possible in a globally hyperbolic Lorentzian manifold. Now we can write $\phi = \phi(\bm{x}, t)$, where $\bm{x}$ are the three spacelike coordinates. I will call the space of all functions $\mathbb{R}^3 \rightarrow \mathbb{R}$ the \textit{configuration space} of the field. The time coordinate generates paths through this space, similarly to a free particle in 1 dimension. The difference is that now I have at each time infinitely many ``positions", each corresponding to the value of $\phi$ at some point in space.\footnote{I will sometimes use the word \textit{space} in this section to address the spacelike hypersurface in $M$ generated by fixing the time coordinate. It should be clear, where something else is meant by it.} 

It is problematic to integrate over values of the field at \textit{all} space points. That is why sources are introduced and only vacuum-to-vacuum transitions considered.

Consider a field with classical action $S$ given by Lagrangian density $\mathcal{L}$ through
\begin{equation} S_0[\phi] = \int d^4x \: \mathcal{L}(\phi, \partial \phi, ...). \end{equation}
The source $J(x)$ is introduced, such that the source dependent action is given by
\begin{equation} S_J[\phi] = \int d^4x \: \big(\mathcal{L}(\phi, \partial \phi, ...) + J(x) \phi(x)\big). \end{equation}

Let us pick a vacuum configuration $\phi_0(\bm{x})$, corresponding to a state $\ket{0}$ in the Fock space of the field and write $\ket{0, t}$ for the time evolution of a field state with $\ket{0, - \infty} = \ket{0}$. I would like to now get a path integral expression for the vacuum-to-vacuum transition probability. Studying eq. \ref{pathint} and generalizing it to the field configuration space, given vacuum initial state $\ket{0, - \infty} = \ket{0}$, I get
\begin{equation} \label{path_integral_qft}
\braket{0 | 0, \infty}_J = \frac{1}{\mathcal{Z}_0} \int_{[\phi_0, \phi_0]} \mathcal{D}\phi \: e^{iS_J[\phi]} \eqqcolon \mathcal{Z}[J], 
\end{equation}
where the normalization
\begin{equation} \mathcal{Z}_0 = \int_{[\phi_0, \phi_0]} \mathcal{D}\phi \: e^{iS_0[\phi]} \end{equation}
is inserted since $\phi_0$ is assumed to have zero energy, and so no time evolution and hence I require $\braket{0 | 0, \infty}_{J=0} = 1$. The path integral is over all paths through configuration space $\phi(\bm{x}, t)$, such that $\phi(\bm{x}, \pm\infty) = \phi_0(\bm{x})$. Taking functional derivatives of $\mathcal{Z}[J]$ with respect to the sources and setting $J = 0$ gives time ordered expectation values of field operators, i.e. we can use them to set different initial and final conditions. For example
\begin{equation}
\frac{1}{i^n} \frac{\delta^n \mathcal{Z}[J]}{\delta J(x_1) ... \delta J(x_n)} = \braket{0 | \mathcal{T}[\phi(x_1) ... \phi(x_n)]}.
\end{equation}
For further details of this procedure, see e.g. \cite{brown_qft}. 

The projection postulate from the free particle picture generalizes to quantum fields. If an observable $F[\phi(\bm{x}, t)]$ is measured to be in a real interval $\Delta_\alpha$, the path integral will run over only a subset of paths through the configuration space $C_\alpha$, where
\begin{equation} 
C_\alpha = \{\phi(\bm{x}, t): F[\phi(\bm{x}, t)] \in \Delta_\alpha, \phi(\bm{x}, \pm\infty) = \phi_0(\bm{x})\}. 
\end{equation}
So we now write the vacuum-to-vacuum transition probability
\begin{equation} \mathcal{Z}[J] = \frac{1}{\mathcal{Z}_0} \int_{C_\alpha} \mathcal{D}\phi \: e^{iS_J[\phi]}. \end{equation}

It is worth noting that I have chosen a preferred set of coordinates for the description. This choice enters the formalism only through the boundary conditions. Lagrangian density is a scalar that is integrated over all spacetime and the path integral can be understood as an integral over configurations of the scalar $\phi(\mathrm{x})$. However, since the boundary conditions are given at temporal infinities, they won't have any effect on the general story. There is also the issue of the Unruh effect, which says that the vacuum is observer dependent, even in flat Minkowski spacetime, see \cite{crispino}. I will however skip these subtleties, as they do not change the discussion; and so I will assume that all observers can agree on a preferred vacuum state and on the exact form of the boundary conditions, e.g. the inertial observer vacuum at their temporal infinities, in the case of the Unruh effect.

\subsection{Decoherent histories} \label{decoherent_hist}
In the above picture, each path through configuration space is assigned a \textit{complex} amplitude, which are then summed up to get the total probability of transition between the initial and final state. Therefore there is interference between the paths and we cannot assign them classical real additive probabilities. The concept of \textit{decoherent histories} is a way of recovering the additive classical probabilities for larger bundles of paths, which don't interfere. For a thorough introduction see \cite{omnes} and \cite{hartle}. 

Consider again the case of a free particle for simplicity. Let $\ket{\psi}$ be an initial state. I would like to describe a history of the particle as: ``The value of an observable $A_1$ at time $t_1$ is in the interval $\Delta^1_{\alpha_1}$, the value of an observable $A_2$ at time $t_2$ is in the interval $\Delta^2_{\alpha_2}$ etc." Each sequence $\alpha = (\alpha_1, ..., \alpha_n)$ now defines a distinct history. Let $P^i_{\alpha_i}$ for $i = 1, ..., n$ be the Schr\"odinger picture projector of observable $A_i$ corresponding to its value in the interval $\Delta_{\alpha_i}$. Suppose that the alternatives are exhaustive and mutually exclusive, so that
\begin{equation} \label{projectors}
\sum_{\alpha_i} P^i_{\alpha_i} = 1 \;\;\; \text{ and } \;\;\; P^i_{\alpha_i} P^i_{\beta_i} = \delta_{\alpha_i \beta_i}P^i_{\alpha_i}. 
\end{equation}
We now say that, in each history $\alpha$, the state is time evolved in the Schr\"odinger picture by the operator expressing the definite values at the various times $t_i$
\begin{multline}
C_{\alpha} \coloneqq e^{-iH(T-t_n)}P^n_{\alpha_n}e^{-iH(t_n - t_{n-1})} ...\\... e^{-iH(t_2-t_1)}P^1_{\alpha_1}e^{-iHt_1},
\end{multline}
where $H$ is the Hamiltonian. This can be rewritten using the Heisenberg picture projectors as
\begin{equation} 
C_{\alpha} = e^{-iHT}P^n_{\alpha_n}(t_n)...P^1_{\alpha_1}(t_1),
\end{equation}
where $P^i_{\alpha_i}(t) \coloneqq e^{iHt}P^i_{\alpha_i}e^{-iHt}$ are the time evolved Heisenberg picture projectors. The state $\ket{\psi_\alpha} \coloneqq C_{\alpha}\ket{\psi}$ is not normalized and its square norm is the probability of the history
\begin{equation} P(\alpha) = \braket{\psi_\alpha | \psi_\alpha}. \end{equation}

Each vector $\ket{\psi_\alpha}$ corresponds to a history. The histories are said to be \textit{decoherent} if
\begin{equation} \label{decoherence} \braket{\psi_{\alpha'} | \psi_\alpha} \approx 0 \;\;\; \forall \alpha' \neq \alpha. \end{equation}
This means that each history is disjoint from all the others and it makes sense to talk about them as being different alternatives for the dynamics of the system. 

The important consequence of definition eq. \ref{decoherence} is that it ensures additivity of probabilities. To see this, consider the $n=2$ case. Now
\begin{equation}\begin{split}\label{additivity}
\sum_{\alpha_1}P(\alpha_1, \alpha_2) &= \sum_{\alpha_1}\braket{\psi | P^1_{\alpha_1}(t_1) P^2_{\alpha_2}(t_2) P^2_{\alpha_2}(t_2) P^1_{\alpha_1}(t_1) | \psi} \\
&\approx \sum_{\alpha_1, {\alpha'}_1}\braket{\psi | P^1_{{\alpha'}_1}(t_1) P^2_{\alpha_2}(t_2) P^2_{\alpha_2}(t_2) P^1_{\alpha_1}(t_1) | \psi} \\
&= \sum_{\alpha_1}\braket{\psi | P^2_{\alpha_2}(t_2) P^2_{\alpha_2}(t_2) P^1_{\alpha_1}(t_1) | \psi} \\
&= \braket{\psi | P^2_{\alpha_2}(t_2) P^2_{\alpha_2}(t_2) | \psi} \\
&= P(\alpha_2),
\end{split}\end{equation}
where the second equality is by eq. \ref{decoherence} and the third and fourth equality by eq. \ref{projectors}.

It is important to emphasise that the histories do not correspond to a sequence of \textit{measurements}. These would in fact always enforce decoherence. For example, consider the standard double slit experiment. In the path integral formulation, paths passing through each slit interfere, giving rise to an interference pattern on the screen, giving the probabilities for where the particle lands. This means that the histories like: ``Particle passed through the first slit and then landed in the interval $\Delta$ on the screen," do not decohere, i.e. don't obey the sum rule. However, if we measure which slit the particle passed through, we enforce decoherence between these histories, the sum rule between the histories as before is obeyed, as indicated by the disappearance of the interference pattern.

\subsection{Causality in decoherent histories approach in AQFT} \label{causality_in_decoherent_hist}
In an AQFT $\mathpzc{A}(\bm{M})$, I will use the concept of \textit{projective effect valued measure} (PEVM) to generalize projector valued measures $\{P^i_{\alpha_i}\}$ from the previous section to the *-algebraic setting, as in section \ref{FV-measurement_scheme}. Recall the definition of PEVMs def. \ref{evm}. 

Suppose that two observers $\mathcal{O}_1$ and $\mathcal{O}_2$ perform measurements associated with PEVMs $E_1: \chi_1 \rightarrow \mathpzc{A}(\bm{M}; O_1)$ and $E_2: \chi_2 \rightarrow \mathpzc{A}(\bm{M}; O_2)$ respectively, where $\chi_1$ and $\chi_2$ are associated with sets of all possible measurement values $\Omega_1$ and $\Omega_2$ respectively. The elements of $\chi_i$, as in section \ref{FV-measurement_scheme}, correspond to the intervals in which the values of observables measured by $\mathcal{O}_i$ can be found. Suppose also that there exists a spacelike Cauchy surface, such that $O_1$ is in its causal past and $O_2$ is in its causal future. Now we can choose coordinates, such that there is a coordinate $t$, which is constant on the Cauchy surface, parametrizes timelike curves and in which $O_1$ is before $O_2$, so that we can apply the coordinate dependent formalism from section \ref{path_int_qft}. This generates a set of histories associated with operators
\begin{equation} C_\alpha = E_2(\alpha_2) E_1(\alpha_1), \end{equation}
labelled by $\alpha = (\alpha_1, \alpha_2)$. These operators can be localized in any compact region containing $O_1$ and $O_2$, due to the Haag property (condition \ref{haag} in section \ref{aqft}). We can now generalize the notion of decoherent histories to
\begin{equation} C_{\alpha'}^* C_\alpha = 0, \end{equation}
for all $\alpha, \alpha'$ such that $\alpha'_i \cap \alpha_i = \O$ for some $i$. Given a state $\omega$, the probability of a history associated with the sequence $\alpha$ is
\begin{equation} P(\alpha) = \omega(C_\alpha^*C_\alpha). \end{equation}

If we do not allow any other communication between the observers, the observer $\mathcal{O}_2$ cannot know the result of the measurement $\mathcal{O}_1$ performs. Hence if $\mathcal{O}_1$ does perform a measurement associated with the subset $A_1 \subset \chi_1$, such that $\alpha_1 \cap \beta_1 = \O$ for all $\alpha_1, \beta_1 \in A_1$ such that $\alpha_1 \neq \beta_1$ and $\bigcup_{\alpha_1 \in A_1} \alpha_1 = \Omega_1$, the probability distribution of results $\mathcal{O}_2$ observes will be 
\begin{equation} P(\alpha_2 | A_1) = \sum_{\alpha_1 \in A_1} P(\alpha_1, \alpha_2) = \sum_{\alpha_1 \in A_1} \omega(C_{(\alpha_1, \alpha_2)}^*C_{(\alpha_1, \alpha_2)}). \end{equation}
On the other hand if $\mathcal{O}_1$ doesn't perform their measurement, the probability distribution $\mathcal{O}_2$ observes will be
\begin{equation} P(\alpha_2) = \omega\big((E_2(\alpha_2)^*E_2(\alpha_2)\big) = \omega(E_2(\alpha_2)), \end{equation}
where the last equality comes from the definition of PEVMs def. \ref{evm}.

Hence $\mathcal{O}_1$ will be able to send a message (by either performing the measurement or not) to $\mathcal{O}_2$ if
\begin{equation} \label{signalling} P(\alpha_2 | A_1) \neq P(\alpha_2), \end{equation}
which can be written explicitly as
\begin{equation} \omega(E_2(\alpha_2)) \neq \sum_{\alpha_1 \in A_1} \omega(C_{(\alpha_1, \alpha_2)}^*C_{(\alpha_1, \alpha_2)}). \end{equation}
Since we want a condition on the \textit{measurements themselves}, we don't want it to depend on the state $\omega$. Hence we need the condition eq. \ref{signalling} to hold for all $\omega$ and the condition on the measurements to be able to transfer information from $\mathcal{O}_1$ to $\mathcal{O}_2$ becomes
\begin{equation} \begin{split}
E_2(\alpha_2) &\neq \sum_{\alpha_1 \in A_1} C_{(\alpha_1, \alpha_2)}^*C_{(\alpha_1, \alpha_2)} \\
\text{or }&\text{equivalently}\\
E_2(\alpha_2) &\neq \sum_{\alpha_1 \in A_1} E_1(\alpha_1) E_2(\alpha_2) E_1(\alpha_1) .
\end{split} \end{equation}

By comparing the condition in eq. \ref{signalling} with eq. \ref{additivity} we notice that the measurements won't be able to signal for any choice of $A_1$ if the histories they generate decohere. In fact, studying the derivation of eq. \ref{additivity}, it will be enough to demand that the histories \textit{decohere in the first observable} only. I mean by this that 
\begin{equation} C^*_{(\alpha'_1, \alpha_2)} C_{(\alpha_1, \alpha_2)} = 0 \;\;\; \forall \alpha_2, \alpha'_1 \cap \alpha_1 = \O. \end{equation}
Hence we define decoherent histories in AQFT as

\begin{definition}[Decoherent histories in AQFT]\label{def_decoherent_hist_aqft}
Suppose an AQFT $\mathpzc{A}(\bm{M})$ and a set of $n$ PEVMs $\{E_i\}_{i=1}^n$, such that $E_i: \chi_i \rightarrow \mathpzc{A}(\bm{M}; O_i)$. Suppose further that there exists a set of $n-1$ non-intersecting spacelike Cauchy surfaces $\{\Sigma_i\}_{i=1}^{n-1} \in M$, such that for each $1 \leq i \leq n-1$ all $O_j$ with $j \leq i$ are in the causal past of $\Sigma_i$ and all $O_k$ with $k > i$ are in the causal future of $\Sigma_i$. Now let us choose a coordinate system in $\bm{M}$, such that there is a coordinate $t$, which is constant on each $\Sigma_i$, increases in the direction of the time orientation on $\bm{M}$ and the regions $O_i$ are ordered in $t$. 

The histories $C_\alpha \coloneqq E_n(\alpha_n) ... E_1(\alpha_1)$ for each $\alpha = (\alpha_1, ..., \alpha_n)$ are said to be \textit{decoherent in the i-th observable} if
\begin{equation} \label{aqft_partial_dch} C_{(\alpha_1, ..., \alpha^\prime_i, ..., \alpha_n)}^* C_{(\alpha_1, ..., \alpha_i, ..., \alpha_n)} = 0 \;\; \forall \alpha_{j \neq i}, \alpha'_i \cap \alpha_i = \O. \end{equation}

Furthermore, the histories are called \textit{decoherent} if
\begin{equation} C^*_{\alpha'}C_\alpha = 0 \end{equation}
for all $\alpha', \alpha$ such that $\alpha'_i \cap \alpha_i = \O$ for some $i$.
\end{definition}

Let us consider the case of two observers again. Suppose $O_1$ and $O_2$ are in fact causally disjoint. Now causality demands that they can't be able to signal each other. This is ensured by Einstein causality in AQFT (condition \ref{einstein} in section \ref{aqft}), which implies that
\begin{equation} [E_1(\alpha_1), E_2(\alpha_2)] = 0. \end{equation}
This is enough to ensure decoherence in the first observable, since
\begin{equation} \begin{split} \label{two_obs_causality}
(C_{(\alpha'_1, \alpha_2)})^* C_{(\alpha_1, \alpha_2)} &= E_1(\alpha'_1) E_2(\alpha_2) E_2(\alpha_2) E_1(\alpha_1) \\
&= E_1(\alpha'_1) E_1(\alpha_1) E_2(\alpha_2) \\
&= E_1(\alpha'_1 \cap \alpha_1) E_2(\alpha_2)
\end{split} \end{equation}
vanishes if $\alpha'_1 \cap \alpha_1 = \O$.

The Sorkin scenario, however, introduces a complication. Decoherence becomes non-trivial, since Bob's projectors don't commute with either Alice's, nor Charlie's. If we give Alice, Bob and Charlie the PEVMs $E_i: \chi_i \subset \mathpzc{A}(\bm{M}; O_i)$ with $\chi_i$ associated with sets $\Omega_i$ and $i = A, B, C$ respectively, the condition on the PEVMs so that Alice cannot signal Charlie becomes
\begin{equation} \sum_{\alpha_A \in A_A} P(\alpha_A, \alpha_B, \alpha_C) = P(\alpha_B, \alpha_C), \end{equation}
for all $A_A$ such that $\alpha_A \cap \beta_A = \O$ for all $\alpha_A, \beta_A \in A_A$ such that $\alpha_A \neq \beta_A$ and $\bigcup_{\alpha_A \in A_A} \alpha_A = \Omega_A$, or explicitly
\begin{equation} \label{causality_sorkin} \begin{split}
E_A(\alpha'_A) E_B&(\alpha_B) E_C(\alpha_C) E_C(\alpha_C) E_B(\alpha_B) E_A(\alpha'_A) = 0 \\
&\forall \alpha_B, \alpha_C, \alpha'_A \cap \alpha_A = \O, 
\end{split} \end{equation}
since in the Sorkin scenario we can order Alice, Bob and Charlie uniquely in the sense of def. \ref{def_decoherent_hist_aqft}. 

Even more worryingly, it seems we should be able to introduce another observer, Beatrice, localized in a region $O_{B'}$, such that $O_{B'} \cap J^+(O_A) \neq \O$, $O_{B'} \cap J^-(O_C) \neq \O$, $O_{B'} \cap O_B = \O$ and that there is a spacelike Cauchy surface with $O_B$ in its causal past and $O_{B'}$ in its causal future, with PEVM $E_{B'}: \chi_{B'} \rightarrow \mathpzc{A}(\bm{M})$, where $\chi_{B'}$ is associated with $\Omega_{B'}$. This will result in an even stricter condition than eq. \ref{causality_sorkin}. The new condition will be
\begin{equation} \begin{split}
C^*_{(\alpha'_A, \alpha_B, \alpha_{B'}, \alpha_C)} C_{(\alpha_A, \alpha_B, \alpha_{B'}, \alpha_C)} = 0 \\
\forall \alpha_B, \alpha_{B'}, \alpha_C, \alpha'_A \cap \alpha_A = \O. 
\end{split} \end{equation}

We can keep adding observers measuring observables localized in the region bounded (in ``time") by some non-intersecting spacelike Cauchy surfaces, for which Alice is in their causal past and Charlie in their causal future, tightening the conditions on Alice not to be able to signal Charlie even further. The only other condition I impose on the newly introduced observables is that I have to be able to order them by non-intersecting spacelike Cauchy surfaces in the sense of def. \ref{def_decoherent_hist_aqft}. I will call this setup the \textit{extended Sorkin scenario} and the observables other than $E_A$ and $E_C$ the \textit{intermediate observables}. 

Unfortunately, if histories in the extended Sorkin scenario with $n-1$ intermediate observers decohere in Alice's observable, it is \textit{not} guaranteed that they will decohere if an $n$-th intermediate observer is included. The PEVMs don't commute in general; and as is the case even for (non-orthogonal) projectors $P, P_1, P_2$ on $\mathbb{R}^n$, given $P_1 P_2 = 0$, it may well be the case that $P_1 P P_2 \neq 0$.

Therefore even in the language of decoherent histories, the conditions imposed by causality on observables in AQFT are highly non-trivial. This language however provides us with a simple check on ``who can signal who" in a particular spacetime setting of observers measuring localized observables, provided they can be ordered in the sense of def. \ref{def_decoherent_hist_aqft}. This is summarized in the following claim.

\begin{claim} \label{signalling_histories}
Given a set of observers $\mathcal{O}_i$, each measuring a PEVM $E_i: \chi_i \rightarrow \mathpzc{A}(\bm{M}; O_i)$, which are ordered in the sense of def. \ref{def_decoherent_hist_aqft}, observer $\mathcal{O}_j$ will not be able to signal the observer $\mathcal{O}_k$ with $j < k$ if the histories $C_\alpha = E_k(\alpha_k) ... E_j(\alpha_j)$ for $\alpha = (\alpha_j, ..., \alpha_k)$ decohere in the observable corresponding to $E_j$. 
\end{claim}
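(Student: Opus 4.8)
I would prove Claim~\ref{signalling_histories} by generalising the marginalisation computation of eq.~\ref{additivity} (and the two-observer calculation of eq.~\ref{two_obs_causality}) from a single pair of intermediate projectors to the whole causal chain $\mathcal{O}_j,\mathcal{O}_{j+1},\dots,\mathcal{O}_k$. The ordering hypothesis of def.~\ref{def_decoherent_hist_aqft}, together with global hyperbolicity, lets me regard $\mathcal{O}_j$ as the first and $\mathcal{O}_k$ as the last observer of this sub-chain, so the histories $C_\alpha = E_k(\alpha_k)\cdots E_j(\alpha_j)$, with $\alpha = (\alpha_j,\dots,\alpha_k)$, are unambiguously defined, and ``decoherence in the observable corresponding to $E_j$'' reads $C^*_{(\alpha'_j,\alpha_{j+1},\dots,\alpha_k)}\,C_{(\alpha_j,\alpha_{j+1},\dots,\alpha_k)} = 0$ whenever $\alpha'_j\cap\alpha_j=\emptyset$, for every choice of the remaining labels. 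The object I want to control is the marginal outcome distribution seen by $\mathcal{O}_k$, namely $\sum_{\alpha_{j+1},\dots,\alpha_{k-1}}\omega(C_\alpha^*C_\alpha)$, and I want to show it is unchanged whether $\mathcal{O}_j$ performs the measurement (in which case, $\mathcal{O}_k$ being ignorant of the outcome, one coarse-grains over a mutually exclusive complete partition $A_j$ of $\Omega_j$) or declines to perform it (in which case the first projector is dropped, so the chain's history becomes $C'_{\alpha'} = E_k(\alpha_k)\cdots E_{j+1}(\alpha_{j+1})$ with $\alpha'=(\alpha_{j+1},\dots,\alpha_k)$).

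The heart of the argument is an algebraic identity holding for each fixed tuple $(\alpha_{j+1},\dots,\alpha_k)$. Using the idempotence $E_k(\alpha_k)E_k(\alpha_k)=E_k(\alpha_k)$ and self-adjointness of the effects from the PEVM axioms of def.~\ref{evm}, I would write $C_\alpha^*C_\alpha = E_j(\alpha_j)\,\Pi\,E_j(\alpha_j)$ with $\Pi \coloneqq E_{j+1}(\alpha_{j+1})\cdots E_k(\alpha_k)\cdots E_{j+1}(\alpha_{j+1})$, and then
\begin{equation}
\begin{split}
\sum_{\alpha_j\in A_j} E_j(\alpha_j)\,\Pi\,E_j(\alpha_j)
&= \sum_{\alpha_j,\alpha'_j\in A_j} E_j(\alpha'_j)\,\Pi\,E_j(\alpha_j) \\
&= \mathds{1}\,\Pi\,\mathds{1} = \Pi ,
\end{split}
\end{equation}
where the off-diagonal terms with $\alpha'_j\neq\alpha_j$ were inserted at no cost because they equal $C^*_{(\alpha'_j,\dots)}C_{(\alpha_j,\dots)}=0$ by decoherence in the observable corresponding to $E_j$ (distinct blocks of a partition are disjoint), and the two resulting sums each collapse to $\mathds{1}$ by additivity and normalisation of the EVM $E_j$ on the complete partition $A_j$. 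Since $\Pi = (C'_{\alpha'})^*C'_{\alpha'}$, applying the state $\omega$ and summing over $\alpha_{j+1},\dots,\alpha_{k-1}$ gives $P(\alpha_k\mid A_j)=P(\alpha_k)$, i.e.\ $\mathcal{O}_j$ cannot influence the statistics recorded by $\mathcal{O}_k$.

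I do not expect a genuine obstacle here: the statement is essentially the arbitrary-length version of eq.~\ref{additivity}, and the only care needed is bookkeeping. Specifically, one must check (i) that def.~\ref{def_decoherent_hist_aqft} indeed supplies the decoherence relation for \emph{every} value of the intermediate and final labels, which is precisely what licenses the free insertion of off-diagonal terms; (ii) that ``$\mathcal{O}_j$ does not measure'' is correctly modelled by deleting $E_j(\alpha_j)$, the intermediate observers $\mathcal{O}_{j+1},\dots,\mathcal{O}_{k-1}$ being held fixed, so that comparing $\mathcal{O}_k$'s statistics really is comparing marginals of the joint distribution shown to be unchanged; and (iii) that the ordering hypothesis together with Einstein causality (condition~\ref{einstein} of section~\ref{aqft}) removes any ambiguity in writing $C_\alpha$ as the ordered product $E_k(\alpha_k)\cdots E_j(\alpha_j)$, since effects of observers not separated by one of the surfaces $\Sigma_i$ commute and hence may be reordered freely. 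Once these points are settled the computation above completes the proof.
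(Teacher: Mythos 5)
Your proposal is correct and is essentially the paper's own argument: both rest on the marginalisation computation of eq.~\ref{additivity}, in which decoherence in the observable corresponding to $E_j$ licenses the free insertion of the vanishing off-diagonal terms $E_j(\alpha'_j)\,\Pi\,E_j(\alpha_j)$, after which additivity and normalisation of the PEVM on the complete partition $A_j$ collapse the sum to $\Pi=(C'_{\alpha'})^*C'_{\alpha'}$, the history with $E_j$ deleted. You write out explicitly the arbitrary-length version of that computation, which the paper only gestures at; conversely, the paper's stated justification is devoted almost entirely to the one point you pass over, namely that observers ordered before $\mathcal{O}_j$ or after $\mathcal{O}_k$ do not affect the conclusion (those after $\mathcal{O}_k$ never enter the distribution $\mathcal{O}_k$ records, and for one ordered before $\mathcal{O}_j$ the extended histories $E_k(\alpha_k)\cdots E_j(\alpha_j)E_n(\alpha_n)$ inherit decoherence in $E_j$ because the vanishing middle factor is merely conjugated by $E_n(\alpha_n)$, so your identity still yields the $E_j$-free chain). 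Adding that one line would make your proof cover the claim exactly as stated for a general set of observers.
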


This claim is justified by noting that observers ordered before $\mathcal{O}_j$ and after $\mathcal{O}_k$, don't provide further constraints on causality between $\mathcal{O}_j$ and $\mathcal{O}_k$. 

To see that, consider an observer $\mathcal{O}_m$ ordered after $\mathcal{O}_k$. The probability distribution of measurement of $E_k$ is determined just by histories ending at the measurement of $E_k$ and we can completely ignore $E_m$, by definition of what we mean by a probability distribution for measurement of $E_k$. 

If instead we consider $\mathcal{O}_n$ ordered before $\mathcal{O}_j$, histories $C_\alpha = E_k(\alpha_k) ... E_j(\alpha_j) E_n(\alpha_n)$ will decohere in the observable corresponding to $E_j$, given that histories $C_\alpha = E_k(\alpha_k) ... E_j(\alpha_j)$ decohere in the observable corresponding to $E_j$. This is an immediate consequence of def. \ref{def_decoherent_hist_aqft}. 

Note that the claim \ref{signalling_histories} implies that an observer will \textit{not} be able to signal the immediately following observer if their corresponding observables commute (following logic similar to eq. \ref{two_obs_causality}). This is consistent with the usual notion of non-signalling in quantum mechanics.

The problem with getting a more refined condition on causality in this formalism is apparent from the necessity of ordering the observables. It may well happen that ordering of some localized observables in the sense of def. \ref{def_decoherent_hist_aqft} is impossible. Now we cannot generate histories and all of the machinery above falls apart. The problem is, that the product of non-commuting self-adjoint operators is not necessarily self-adjoint, since for $E_1^* = E_1$ and $E_2^* = E_2$
\begin{equation} (E_1 E_2)^* = E_2^* E_1^* = E_2 E_1 \neq E_1 E_2. \end{equation}
Therefore products of PEVMs are not in general PEVMs themselves. It is not clear how to proceed in this language. However the discussion at the end of section \ref{FV-framework} shows that if we introduce the apparatus, there is in fact a joint PEVM localized in a region containing the regions where $E_1$ and $E_2$ are localized, which has $E_1$ and $E_2$ as its marginals. Hence the FV framework gives us a way of combining measurements that cannot be causally ordered. Hence claim \ref{signalling_histories} in fact does cover this eventuality: we need measurement theory to transform a given collection of observables into one which can be causally ordered and to which claim \ref{signalling_histories} can be applied.

\paragraph{Back to path integrals.} Now let us try to formulate causality conditions in the path integral language. Consider measurements of two functionals, $F_1[\phi]$ and $F_2[\phi]$, which depend on $\phi(\mathrm{x}), \partial \phi(\mathrm{x}), ...$ only for $\mathrm{x} \in O_1$ and $\mathrm{x} \in O_2$ respectively, where $O_1, O_2 \subset M$. Observer $\mathcal{O}_1$ is measuring $F_1$, observer $\mathcal{O}_2$ is measuring $F_2$. Suppose that $O_1$ and $O_2$ are spacelike separated, so causality dictates that $\mathcal{O}_1$ and $\mathcal{O}_2$ should not be able to signal each other.

A necessary condition for $\mathcal{O}_1$ not to be able to signal $\mathcal{O}_2$ is
\begin{equation} \label{path_int_causality} 
\Big\| \int_{C_{\alpha_2}} \mathcal{D}\phi \: e^{iS_0[\phi]} \Big\|^2  = \sum_{\alpha_1 \in A_1} \Big\| \int_{C_{(\alpha_1, \alpha_2)}} \mathcal{D}\phi \: e^{iS_0[\phi]} \Big\|^2, 
\end{equation}
where
\begin{equation} C_{\alpha_2} = \{\phi(\mathrm{x}): F_2[\phi(\mathrm{x})] \in \Delta_{\alpha_2}\} \end{equation}
and
\begin{equation} C_{(\alpha_1, \alpha_2)} = \{\phi(\mathrm{x}): F_1[\phi(\mathrm{x})] \in \Delta_{\alpha_1} \text{ and } F_2[\phi(\mathrm{x})] \in \Delta_{\alpha_2}\}. \end{equation}
The set $A_1$ is a set of labels of non-intersecting intervals $\Delta_{\alpha_1}$, which cover all possible values of $F_1[\phi]$.

It is not at all obvious that eq. \ref{path_int_causality} will be satisfied for arbitrary $F_1, F_2$. It is definitely not something guaranteed in the formalism. The path integral eq. \ref{path_integral_qft} is summing over \textit{all} the configurations $\phi(\mathrm{x})$, so there is no dynamics, from which causality could emerge, as in classical field theory, where it comes from the group velocity, limiting the speed at which disturbances can travel. Causality therefore has to be imposed \textit{a posteriori} (eq. \ref{path_int_causality}), just as in e.g. AQFT (where this is done either by including the apparatus as in the FV approach, section \ref{FV-framework}, or by considering the conditions on the histories, section \ref{causality_in_decoherent_hist}). At this stage, the path integral approach seems to me a bit more clumsy than the others for finding the right restriction on the observables that would ensure causality, in view of the difficulties of working with path integrals. Furthermore, there is not even a simple criterion equivalent to Einstein causality (condition \ref{einstein} in section \ref{aqft}), not to mention a condition that would prevent signalling in the Sorkin scenario. However, I would not dare to anticipate future developments of the solutions to this problem.

\section{Conclusions} \label{conclusions}
Causality is \textit{not} intrinsically present in either usual quantum mechanics, or quantum field theory. In the case of ordinary quantum mechanics, we do not even have all the spacetime coordinates included in the formalism, which prevents us from even formulating what is meant by relativistic causality in this theory (section \ref{dyn_vars_vs_coords}). This applies even if we use the sum over histories approach (section \ref{sum_over_hist}). I have examined, what quantum operations can we perform, if we use the classical notion of locality and quantize only the internal degrees of freedom (section \ref{causality_from_classical_loc}). In this case, causality follows from imposing locality on the operations.

Quantum field theory is therefore the right language, in which to think about quantizing causality. However, no matter whether a QFT is formulated in a Fock space, in terms of path integrals (section \ref{causality_in_decoherent_hist}) or as an AQFT (section \ref{aqft}), causality has to be imposed \textit{a posteriori}. An important practical setup in which to consider causality and which amplifies the main problem is the Sorkin scenario (section \ref{sorkin_scn}). Fewster and Verch proposed in \cite{fewster} a way to formalize measurement in AQFT as an interaction between the system and a probe (summarized in section \ref{FV-framework}). Their construction, the FV framework, localizes measurement and leads to a causal theory (section \ref{FV-causality}). This is a very satisfying result and one could argue that this is all we need. 

However, it is still an important practical problem to provide conditions on observables in QFT, which guarantee that their measurement won't violate causality. The task of finding such conditions is highly non-trivial and I haven't found a satisfying answer in the literature. I used the notion of decoherent histories to find quite a simple condition that checks for causality in a particular setting, where the observers can be ordered in a particular sense using non-intersecting spacelike Cauchy surfaces (section \ref{causality_in_decoherent_hist}). The generalization to a condition on the observables is however not simple.

\noindent\hrulefill

\printbibliography

@inbook{sorkin,
	author = {R. Sorkin},
	title = {Impossible measurements on quantum fields},
	booktitle = {Directions in General Relativity},
	publisher = {Cambridge University Press},
	year = {1993},
	volume = {II}
}

@article{time,
	author = {J. Hilgevoord},
	title = {Time in quantum mechanics},
	journal = {American Journal of Physics},
	number = {70, 301},
	year = {2002}
}

@article{preskill,
   	title={Causal and localizable quantum operations},
   	volume={64},
   	ISSN={1094-1622},
   	url={http://dx.doi.org/10.1103/PhysRevA.64.052309},
   	DOI={10.1103/physreva.64.052309},
   	number={5},
   	journal={Physical Review A},
   	publisher={American Physical Society (APS)},
   	author={D. Beckman, D. Gottesman, M. A. Nielsen and J. Preskill},
   	year={2001},
   	month={Oct}
}

@misc{borsten,
      	title={Impossible measurements revisited}, 
      	author={L. Borsten, I. Jubb and G. Kells},
      	year={2021},
      	eprint={1912.06141},
      	archivePrefix={arXiv},
      	primaryClass={quant-ph}
}

@book{haag,
	title = {Local Quantum Physics : Fields, Particles, Algebras},
	author = {R. Haag},
	year = {1992},
	publisher = {Springer Berlin / Heidelberg}
}

@misc{aqft,
      	title={Algebraic Quantum Field Theory -- an introduction}, 
      	author={C. J. Fewster and K. Rejzner},
      	year={2019},
      	eprint={1904.04051},
      	archivePrefix={arXiv},
      	primaryClass={hep-th}
}

@misc{algebras,
      	title={Introduction to Normed *-Algebras and their Representations, 7th ed}, 
      	author={M. Thill},
      	year={2020},
      	eprint={1011.1558},
      	archivePrefix={arXiv},
      	primaryClass={math.OA}
}

@misc{fewster,
      	title={Quantum fields and local measurements},
      	author={C. J. Fewster and R. Verch},
      	year={2020},
      	eprint={1810.06512},
      	archivePrefix={arXiv},
      	primaryClass={math-ph}
}

@article{fewster_2012,
	title={Dynamical Locality and Covariance: What Makes a Physical Theory the Same in all Spacetimes?},
   	volume={13},
   	ISSN={1424-0661},
   	url={http://dx.doi.org/10.1007/s00023-012-0165-0},
   	DOI={10.1007/s00023-012-0165-0},
   	number={7},
   	journal={Annales Henri Poincaré},
  	publisher={Springer Science and Business Media LLC},
   	author={C. J. Fewster and R. Verch},
	year={2012},
   	month={Mar},
   	pages={1613–1674}
}

@article{bostelmann,
   	title={Impossible measurements require impossible apparatus},
   	volume={103},
   	ISSN={2470-0029},
   	url={http://dx.doi.org/10.1103/PhysRevD.103.025017},
   	DOI={10.1103/physrevd.103.025017},
   	number={2},
   	journal={Physical Review D},
   	publisher={American Physical Society (APS)},
   	author={H. Bostelmann, C. J. Fewster and M. H. Ruep},
   	year={2021},
   	month={Jan}
}

@book{brown_qft, 
	place={Cambridge}, 
	title={Quantum Field Theory}, 
	DOI={10.1017/CBO9780511622649}, 
	publisher={Cambridge University Press}, 
	author={L. S. Brown}, 
	year={1992}
}

@article{caves,
	title = {Quantum mechanics of measurements distributed in time. A path-integral formulation},
	author = {C. M. Caves},
	journal = {Phys. Rev. D},
	volume = {33},
	issue = {6},
  	pages = {1643--1665},
  	numpages = {0},
  	year = {1986},
	month = {Mar},
  	publisher = {American Physical Society},
  	doi = {10.1103/PhysRevD.33.1643},
  	url = {https://link.aps.org/doi/10.1103/PhysRevD.33.1643}
}

@article{hartle,
	title = {The spacetime approach to quantum mechanics},
	journal = {Vistas in Astronomy},
	volume = {37},
	pages = {569-583},
	year = {1993},
	issn = {0083-6656},
	doi = {https://doi.org/10.1016/0083-6656(93)90097-4},
	url = {https://www.sciencedirect.com/science/article/pii/0083665693900974},
	author = {J. B. Hartle}
}

@book{omnes, 
	place={New Jersey}, 
	title={The interpretation of quantum mechanics}, 
	ISBN = {9780691036694}, 
	publisher={Princeton University Press}, 
	author={R. Omn\`es}, 
	year={1997}
}

@article{crispino,
   	title={The Unruh effect and its applications},
   	volume={80},
   	ISSN={1539-0756},
   	url={http://dx.doi.org/10.1103/RevModPhys.80.787},
   	DOI={10.1103/revmodphys.80.787},
   	number={3},
   	journal={Reviews of Modern Physics},
   	publisher={American Physical Society (APS)},
   	author={L. C. B. Crispino, A. Higuchi and G. E. A. Matsas},
   	year={2008},
   	month={Jul},
}

@article{rigidity,
	title = {Die Theorie des starren Elektrons in der Kinematik des Relativitätsprinzips},
	author = {M. Born},
	journal = {Annalen der Physik},
	volume = {335},
	issue = {11},
  	year = {1909},
  	doi = {10.1002/andp.19093351102},
  	url = {https://doi.org/10.1002/andp.19093351102}
}

@article{dorofeev,
	title = {Some maximality results for effect-valued measures},
	journal = {Indagationes Mathematicae},
	volume = {8},
	number = {3},
	pages = {349-369},
	year = {1997},
	issn = {0019-3577},
	doi = {https://doi.org/10.1016/S0019-3577(97)81815-0},
	url = {https://www.sciencedirect.com/science/article/pii/S0019357797818150},
	author = {S.V. Dorofeev and J. {de Graaf}}
}

@article{simultaneous,
	title = {Simultaneous measurement and joint probability distributions in quantum mechanics},
	author = {W. M. de Muynck, P. A. E. M. Janssen and A. Santman},
	journal = {Foundations of Physics},
	volume = {9},
	issue = {1},
	year = {1979},
	doi = {10.1007/BF00715052},
	url = {https://doi.org/10.1007/BF00715052},
	issn = {1572-9516}
}

@book{measurement,
	title = {Quantum measurement},
	author = { P. Busch, P. J. Lahti, J. P. Pellonp\"a\"a and K. Ylinen},
	year = {2016},
	publisher = {Springer International Publishing},
	isbn = {978-3-319-43389-9},
	doi = {10.1007/978-3-319-43389-9}
}

@book{banks_qft, 
	place={Cambridge}, 
	title={Modern Quantum Field Theory: A Concise Introduction}, 
	DOI={10.1017/CBO9780511811500}, 
	publisher={Cambridge University Press}, 
	author={T. Banks}, 
	year={2008}
}

\end{multicols}

\end{document}